\let\newfloat\newfloat@ltx
\def\LC{\mathcal{L}}
\def\ad{^{\dagger}}
\newcommand{\dya}[1]{\ket{#1}\!\bra{#1}}
\newcommand{\poly}{\operatorname{poly}}
\newcommand{\BC}{\mathcal{B}}
\newcommand{\CC}{\mathcal{C}}
\newcommand{\DC}{\mathcal{D}}
\newcommand{\FC}{\mathcal{F}}
\newcommand{\GC}{\mathcal{G}}
\newcommand{\NC}{\mathcal{N}}
\newcommand{\OC}{\mathcal{O}}
\newcommand{\SC}{\mathcal{S}}
\newcommand{\UC}{\mathcal{U}}
\newcommand{\ZC}{\mathcal{Z}}
\renewcommand{\geq}{\geqslant}
\renewcommand{\leq}{\leqslant}
\renewcommand{\Re}{\text{Re}}
\renewcommand{\vec}[1]{\boldsymbol{#1}}  % Bold vectors instead of arrow vectors
\newcommand*{\id}{\openone}
\newcommand*{\iso}{\cong}
\newcommand{\bs}{\textsf{BS}}
\newcommand{\thv}{\vec{\theta}}
\newcommand{\delv}{\vec{\delta}}
\def\be{\begin{equation}}
\def\ee{\end{equation}}
\def\bs{\begin{split}}
\def\e{\end{split}}
\def\ba{\begin{eqnarray}}
\def\bea{\begin{eqnarray}}
\def\tea{\end{eqnarray}}
\def\ea{\end{eqnarray}}
\def\eea{\end{eqnarray}}
\def\liea{\mathfrak{k}}
\def\liea{\mathfrak{g}}
\def\liea{\mathfrak{k}}
\def\liea{\mathfrak{g}}
\newcommand\spn{\text{span}}
\newtheorem{theorem}{Theorem}
\newtheorem{lemma}{Lemma}
\newtheorem{definition}{Definition}
\def\be{\begin{equation}}
\def\te{\end{equation}}
\def\ee{\end{equation}}
\def\ba{\begin{eqnarray}}
\def\bea{\begin{eqnarray}}
\def\tea{\end{eqnarray}}
\def\ea{\end{eqnarray}}
\def\eea{\end{eqnarray}}
\begin{document}
\title{Effects of noise on the overparametrization of quantum neural networks}

\author{Diego Garc\'{i}a-Mart\'{i}n}
\affiliation{Information Sciences, Los Alamos National Laboratory, Los Alamos, NM 87545, USA}
\affiliation{Quantum Research Centre, Technology Innovation Institute, Abu Dhabi, UAE}
\affiliation{Instituto de Física Teórica, UAM/CSIC, Madrid, Spain}

\author{Mart\'{i}n Larocca}
\affiliation{Theoretical Division, Los Alamos National Laboratory, Los Alamos, New Mexico 87545, USA}
\affiliation{Center for Nonlinear Studies, Los Alamos National Laboratory, Los Alamos, New Mexico 87545, USA}

\author{M. Cerezo}
\thanks{cerezo@lanl.gov}
\affiliation{Information Sciences, Los Alamos National Laboratory, Los Alamos, NM 87545, USA}
\affiliation{Quantum Science Center, Oak Ridge, TN 37931, USA}

\begin{abstract}
Overparametrization is  one of the most surprising and notorious phenomena in machine learning. Recently, there  have been several efforts to study if, and how, Quantum Neural Networks (QNNs) acting in the absence of hardware noise can be overparametrized. In particular, it has been proposed that  a QNN can be defined as overparametrized if it has enough parameters to explore all available directions in state space. That is, if the rank of the Quantum Fisher Information Matrix (QFIM) for the QNN's output state is saturated. Here, we explore how the presence of noise affects the overparametrization phenomenon. Our results show that noise can ``turn on'' previously-zero eigenvalues of the QFIM. This enables the parametrized state to explore  directions that were otherwise inaccessible, thus potentially turning an overparametrized QNN into an underparametrized one. For small noise levels, the QNN is quasi-overparametrized, as large eigenvalues coexists with small ones. Then, we prove that as the magnitude of noise increases all the eigenvalues of the QFIM become exponentially suppressed, indicating that the state becomes insensitive to any change in the parameters.  As such,  there is a pull-and-tug effect where noise can enable new directions, but also suppress the sensitivity to parameter updates. Finally, our results imply that current QNN capacity measures are ill-defined when hardware noise is present.     
\end{abstract}

\maketitle

\section{Introduction}

%Overparametrization in Classical NN
Overparametrization has become one of the most important concepts for studying neural networks in classical machine learning. When a neural network is overparametrized, it has a capacity which is larger than the number of training points~\cite{zhang2021understanding}.  Despite being initially counterintuitive, as increasing the number of parameters can lead to overfitting, research has shown that overparametrization can actually improve the performance of a model~\cite{zhang2021understanding,allen2019convergence,du2019gradient,buhai2020empirical}. For example, it has been observed that the generalization error  can decrease when the model size is increased, a phenomenon known as double descent~\cite{belkin2019reconciling,advani2020high,geiger2019jamming}. Additionally, overparametrization can provide convergence guarantees, ensuring that a model will be able to find a good solution during its optimization~\cite{du2018gradient,brutzkus2018sgd}. These benefits make overparametrization an important consideration in the design of classical machine learning algorithms.

% Quantum Machine Learning
In the past few years, there has been a significant amount of effort towards merging concepts from classical machine learning with those of quantum computing, leading to the blossoming field of Quantum Machine Learning (QML)~\cite{biamonte2017quantum,cerezo2020variationalreview,bharti2021noisy,cerezo2022challenges}. The key idea here is that one can leverage the exponentially large dimension of the Hilbert space as a feature space to process and learn from data. Crucially, there is hope that QML has the potential of enabling a quantum advantage in the near-term~\cite{huang2021provably,huang2021quantum}. 

% Quantum Neural Networks
Within the framework of QML, parametrized quantum circuits, or Quantum Neural Networks (QNNs), have received considerable attention due to their versatility and wide usability~\cite{benedetti2019parameterized,nguyen2022atheory,cong2019quantum,beer2020training,verdon2017quantum}. While several works have studied the capabilities, trainability and performance of QNNs~\cite{larocca2021theory,wiersema2020exploring,kiani2020learning,you2022convergence,matos2022characterization,wang2022symmetric,abbas2020power,haug2021capacity,mcclean2018barren,cerezo2020cost,larocca2021diagnosing,zhang2020overparametrization,wierichs2020avoiding,funcke2021best,lee2021towards,anschuetz2021critical,bittel2021training,liu2022analytic,liu2021representation}, most of these consider noiseless scenarios which do not account for the effect of hardware noise~\cite{wang2020noise,franca2020limitations,sharma2019noise,xue2021effects,marshall2020characterizing}. However, since noise is an intrinsic element in near-term quantum computing, it is fundamental to understand how its presence alters noiseless results and changes our understanding of QNNs. For instance, it is known that there exist polynomial-time (yet unpractical) classical algorithms for simulating random quantum circuits in the presence of local depolarizing noise~\cite{gao2018efficient,aharonov2022polynomial}.

%Overparametrization
In this work we study how the recently developed understanding of overparametrization in QNNs~\cite{larocca2021theory,wiersema2020exploring,kiani2020learning,you2022convergence,matos2022characterization,wang2022symmetric} is affected by the presence of quantum noise. In particular, we will review the results of Ref.~\cite{larocca2021theory}, which characterizes the critical number of parameters needed to overparametrize a QNN. It has been observed that
underparametrized QNNs exhibit spurious local minima in the optimization landscape that hinder their trainability. By adding enough parameters to the circuit (hence overparametrizing it), these  false local traps disappear. Since the previous  facilitates the QNN's parameter training,  the overparametrization onset corresponds to a veritable computational phase transition. Notably, in Ref.~\cite{larocca2021theory} the number of parameters needed to  overparametrize a QNN is defined as those needed to saturate the rank of the Quantum Fisher Information Matrix (QFIM), and concomitantly the QNN's capacity, as introduced in~\cite{haug2021capacity,abbas2020power}.

Our results show that the presence of hardware noise can increase the rank of a QFIM whose rank would have been saturated in a noiseless scenario. That is, noise can turn null eigenvalues of a noiseless-state QFIM into non-null eigenvalues of the corresponding noisy-state  QFIM. As schematically depicted in Fig.~\ref{fig:main}, this means that hardware noise allows the QNN to explore previously unavailable directions. Hence,  some of the redundant parameters in an overparametrized noiseless QNN become relevant to control trajectories in state space when the effect of noise is accounted for. As such, noise can potentially render an overparametrized model into an underparametrized one. In addition, we analytically prove that as the noise strength (or depth of the circuit) increases, the eigenvalues of the QFIM become exponentially suppressed. Thus, for large noise levels (or for deep QNNs), the states become insensitive to any change in the parameters. On the positive side, our numerics show that for small noise levels, the model behaves as being \textit{quasi-overparametrized}: Large eigenvalues of the QFIM (the ones that are non-zero in the noiseless setting) coexists with small ones (the ones that were previously zero). Additionally, we prove that certain types of noise, specifically global depolarizing noise or measurement noise~\cite{sharma2019noise,maciejewski2020mitigation}, cannot increase the rank of the QFIM. To conclude, we discuss the implications of our results to QNN capacity measures proposed in the literature, and to other fields such as quantum metrology.

\begin{figure}[t!]
    \centering
    \includegraphics[width=.8\linewidth]{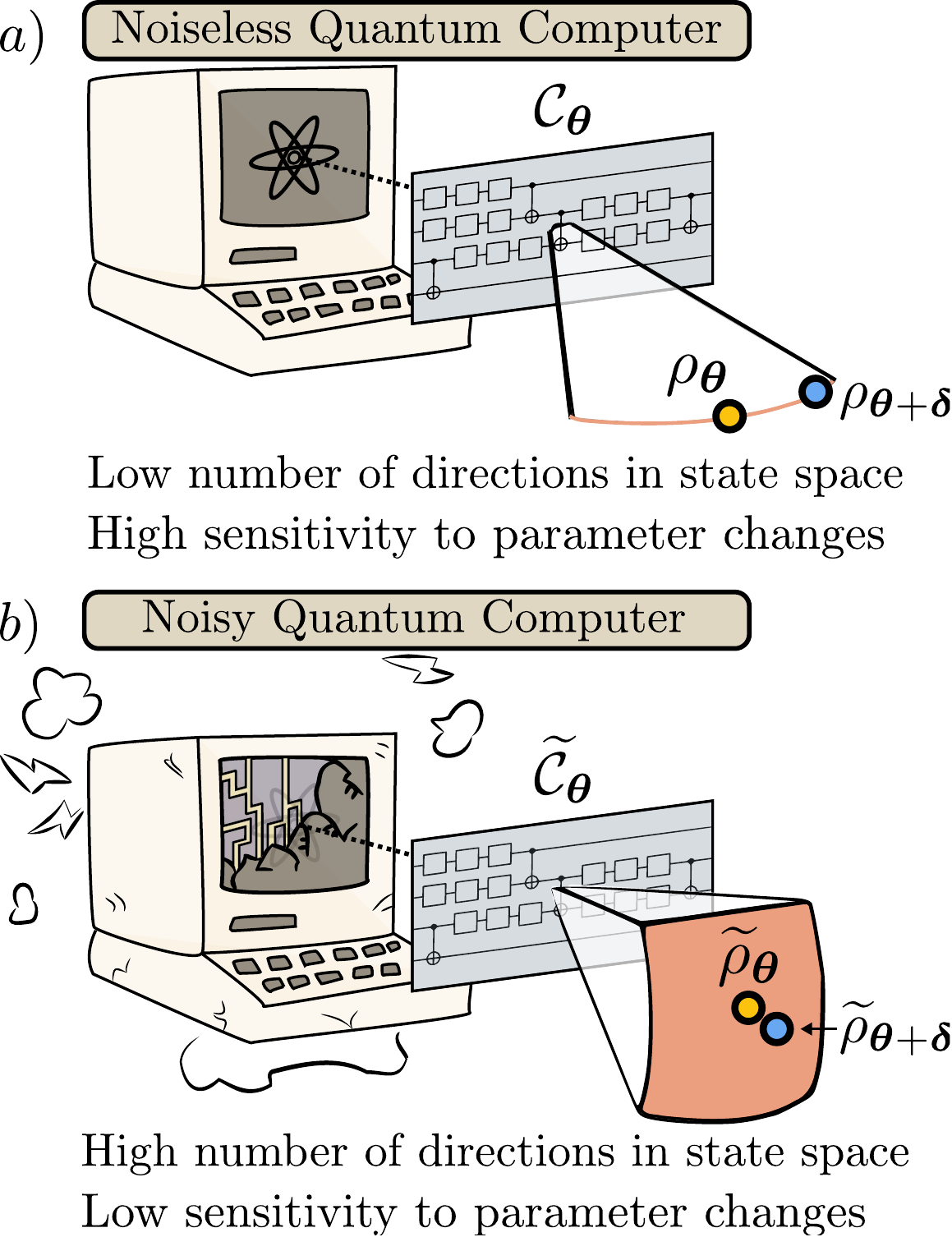}
    \caption{{\bf Schematic diagram of our main results.}  Consider the task of implementing a QNN, i.e., a parametrized unitary channel on a  quantum computer. As shown in~\cite{larocca2021theory}, the overparametrization phenomenon is defined as the QNN having enough parameters to explore all relevant directions in state space. a)  For certain ansatzes the QNN can be efficiently overparametrized with few parameters, as there only exists a small number of available directions in state space. Moreover, for (most) such directions, changes in the parameter values usually translate into changes in state space. 
    b) When the quantum device is faulty, quantum noise will act throughout the computation. In this work, we explore how hardware noise modifies the overparametrization phenomenon. Our results show that quantum noise can enable additional directions in state space. However, we also find that as the noise probability increases, the system becomes more and more insensitive to variations in the parameters. }
    \label{fig:main}   
\end{figure}

\section{Framework}

\subsection{Quantum Neural Networks}

In this work we consider a QML task where the goal is to train a model on a dataset $\SC=\{\rho^{(s)} \}_{s=1}^N$ consisting of $n$-qubit quantum states. We use $d$ to denote the dimension of the composite quantum system, i.e., $d=2^n$. The quantum model is parametrized through a QNN, which is a  unitary quantum channel $\CC_{\thv}$ acting on input states $\rho^{(s)}$ as $\CC_{\thv}(\rho^{(s)})=U(\thv)\rho^{(s)} U\ad(\thv)$. Here, $U(\thv)$ is taken to be of the form
\begin{equation}\label{eq:PSA_ansatz}
    U(\thv)=\prod_{m=1}^M U_m(\theta_m)\,, \quad U_m(\theta_m)= e^{-i \theta_{m}H_m}\,,
\end{equation}
where $H_m$ are traceless Hermitian operators taken from a set of generators $\GC$, and $\thv\in\mathbb{R}^M$ is a vector of trainable parameters. The previous allows us to express $\CC_{\thv}$ as a concatenation of $M$ unitary channels
\begin{equation}\label{eq:channels}
\CC_{\thv}=\CC^M_{\theta_M}\circ\cdots\circ\CC^1_{\theta_1}\,,
\end{equation}
with  $\CC^m_{\theta_m}(\rho^{(s)})=U_m(\theta_m)\rho^{(s)} U_m\ad(\theta_m)$. Thus, the output of the QNN is a parametrized state 
\begin{equation}   \label{eq:noiseless_state}
\rho^{(s)}_{\thv}=\CC_{\thv}(\rho^{(s)})=\CC^M_{\theta_M}\circ\cdots\circ\CC^1_{\theta_1}(\rho^{(s)})\,.
\end{equation}
 The variational parameters $\thv$ are trained by minimizing an appropriately chosen loss function $\LC(\thv)$ which we consider  to be of the form
\begin{equation}\label{eq:loss-function}
\LC(\thv)=\sum_{s=1}^N f_s\left(\Tr[\CC_{\thv}(\rho^{(s)})O_s]\right)\,,
\end{equation}
where $O_s$ and $f_s$ are respectively a (potentially) data-instance-dependent measurement, and a post-processing function.

While there are many aspects that define and distinguish a given QNN from another, we note that one of the most important is the choice of generators $\GC$ from which the QNN in Eq.~\eqref{eq:PSA_ansatz} is built. Once $\GC$ is determined, the next aspect that defines a QNN is its depth, or equivalently, the number of parameters $M$. In particular, one wants to choose  $\GC$ and $M$ such that there exist parameters values for which the task at hand is solved. While in this work we will not discuss how to appropriately choose $\GC$ (we instead refer the reader to~\cite{nguyen2022atheory,larocca2021diagnosing}), let us consider the effect of increasing the value of $M$. In a nutshell, adding more parameters to a QNN increases its expressibility (up to a certain point)~\cite{sim2019expressibility,holmes2021connecting,larocca2021diagnosing}, meaning that the QNN can generate a wider breadth of unitaries. From a practical stand-point, adding new parameters can potentially enable new directions in the state space~\footnote{By ``directions in state space'' we refer to elements of the tangent hyperplane defined at any point in state space~\cite{heydari2015geometric}}, and concomitantly in the loss functions landscape. This can improve the trainability of the model by removing spurious local minima, and increasing the dimension of the solution manifold~\cite{larocca2020exploiting,larocca2021theory}. In the following subsection we will see that the overparametrization phenomenon is indeed linked to the number of independent directions that are accessible in state space.

\subsection{Dynamical Lie algebra, quantum Fisher information, and overparametrization}

We will briefly recall here the main results in~\cite{larocca2021theory}. We will begin by defining the Dynamical Lie Algebra (DLA) of a QNN~\cite{dalessandro2010introduction,zeier2011symmetry}, which can be used to characterize the group of unitaries that it can be implemented~\cite{sim2019expressibility,holmes2021connecting,larocca2021diagnosing}. It follows that the DLA also determines the manifold of all reachable states by the QNN. This will allow us to interpret the overparametrization regime as that in which the QNN has enough parameters to explore all accessible directions in said manifold. In particular, we will show that the rank of the quantum Fisher information matrix can be used to detect the onset of overparametrization.

\begin{definition}[Dynamical Lie Algebra]\label{def:dynamical_lie_algebra} Given a set of Hermitian generators $\GC$, the dynamical Lie algebra $\liea$ is the subspace of operator space spanned by the repeated nested commutators of the elements in $i\GC$. That is
\begin{equation}
\liea={\rm span}_{\mathbb{R}}\left\langle i\GC \right\rangle_{Lie}\,,
\end{equation}
where $\left\langle i\GC \right\rangle_{Lie}$ denotes the Lie closure of $i\GC$.
\end{definition}

The DLA contains information about the ultimate expressiveness of the QNN, since the group of reachable unitaries obtained for any possible parameter values $\thv\in\mathbb{R}^M$ (for an arbitrary large number of parameters $M$) is obtained from the DLA via exponentiation, i.e., as  $\{ U(\thv)\}_{\thv}=\mathbb{G}=e^\liea \subseteq \SC\UC(d)$. We remark that $\mathbb{G}$ is known as the dynamical Lie group. Moreover, the manifold of states obtained from the action of the QNN on an input state $\rho$, given by 
$\{U\rho U\ad, U\in\mathbb{G}\}$, is known as the orbit of $\rho$ under $\mathbb{G}$.

From here, we can ask: \textit{By varying the parameters in the QNN, can we explore all accessible directions in the orbits of the input states, i.e., are we in the overparametrized regime?} To answer this question, let us assume for now that the dataset consist of a single parametrized pure state $\ket{\psi}$. One can study the action of the QNN on $\ket{\psi}$  via the Quantum Fisher Information Matrix (QFIM). To define the QFIM, start by considering a distance measure $\DC$ between two pure states. In particular, we take $\DC$ to be the infidelity, i.e., $\DC(\ket{\psi},\ket{\phi})=1-|\langle\psi|\phi\rangle|^2$. Then, given a set of parameters $\thv$ and an infinitesimal perturbation $\vec{\delta}$, an expansion to second-order of  $\DC$ between the quantum states $\ket{\psi(\thv)}=U(\thv)\ket{\psi}$ and $\ket{\psi(\thv+\vec{\delta})}=U(\thv+\vec{\delta})\ket{\psi}$ gives the Fubini-Study metric~\cite{cheng2010quantum,meyer2021fisher}, i.e., 
\begin{equation} \label{eq:pure-qfim-dis}
    \DC(\ket{\psi(\thv)},\ket{\psi(\thv+\vec{\delta})})=\frac{1}{2}\delv^T\cdot F(\ket{\psi(\thv)})\cdot\delv\,.
\end{equation}
Here, $F(\ket{\psi(\thv)})$ is the QFIM for the state $\ket{\psi(\thv)}$, an  $M\times M$ matrix whose elements are given by~\cite{liu2019quantum}
\begin{align}\label{eq:QFIM-elem}
\begin{split}
[F(\ket{\psi(\thv)})]_{ij}\!=\!4\Re[&\braket{\partial_i\psi(\thv)}{\partial_j\psi(\thv)}\\
&-\braket{\partial_i\psi(\thv)}{\psi(\thv)}\braket{\psi(\thv)}{\partial_j\psi(\thv)}]\,,
\end{split}
\end{align}
where $\ket{\partial_i\psi(\thv)}=\partial \ket{\psi(\thv)}/\partial\theta_i=\partial_i\ket{\psi(\thv)}$, for $\theta_i\in\thv$. 
As  shown in Fig.~\ref{fig:QFIM_directions}, the eigenvalues and eigenvectors of the QFIM provide valuable geometrical information regarding how changes in the parameters translate into changes in the state. Crucially, the rank of the QFIM quantifies the number of independent directions in state space that can be explored by making infinitesimal changes in $\thv$.

\begin{figure}[t!]
    \centering
    \includegraphics[width=1\linewidth]{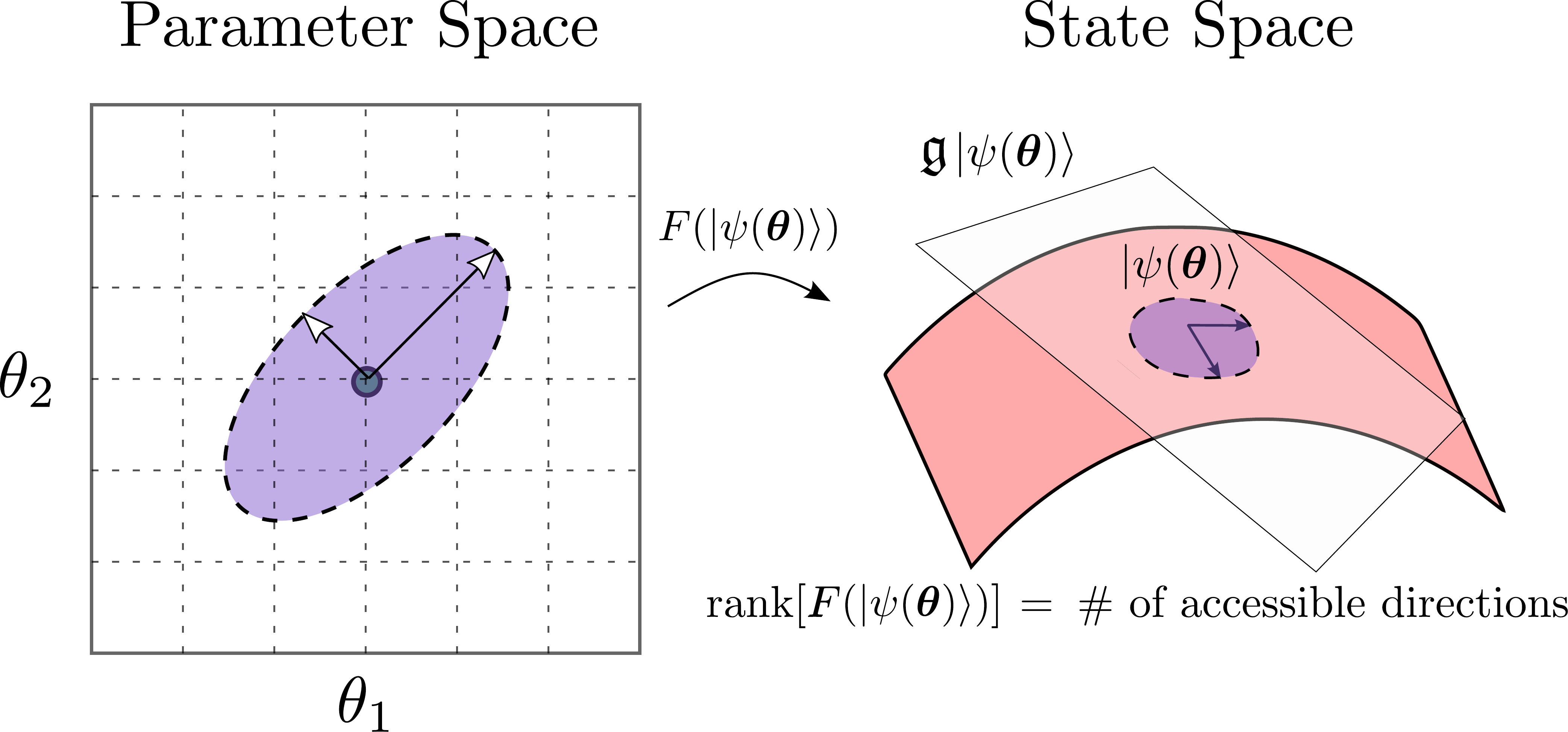}
    \caption{{\bf QFIM and directions in state space.} Let $U(\thv)\in e^{\mathfrak{g}}$ be a parametrized unitary and $\ket{\psi}$ some pure state, so that  $\ket{\psi(\thv)}=U(\thv)\ket{\psi}$. Here we schematically show that the eigenvalues and eigenvectors of the QFIM, $F(\ket{\psi(\thv)})$, inform how changes in parameter space translate into changes in state space. In particular, when modifying $\thv$ following  the  eigenvectors of the  QFIM, the state $\ket{\psi(\thv)}$ explores the corresponding available direction in the tangent space $\mathfrak{g}\ket{\psi(\thv)}$. Additionally, the magnitude of the QFIM eigenvalues determines the sensitivity of the state to a change along an eigenvector direction~\cite{meyer2021fisher}. As such, a large eigenvalue means that it is ``easy'' to nudge the state in state space, while small eigenvalues indicate that the state is insensitive to parameter changes in the direction of the associated eigenvector.        }
    \label{fig:QFIM_directions}   
\end{figure}

As such, one can determine if the QNN is overparametrized by checking if it has enough parameters so that the QFIM saturates its maximum achievable rank. 

\begin{definition}[Overparametrization]\label{def:overparametrization}
    A QNN is said to be overparametrized if the number of parameters $M$ is such that the QFIM saturates its achievable rank $R$ at least in one point of the loss landscape. That is, if increasing the number of parameters past some minimal (critical) value $M_c$ does not further increase the rank of the QFIM, i.e.,
    \begin{equation}\label{eq:Rmu}
        \max_{M\geq M_c,\thv}\rank[F(\ket{\psi(\thv)})] = R\,.
    \end{equation}
\end{definition}

The main result of Ref.~\cite{larocca2021theory} is that $M_c$ is directly linked to the dimension of $\liea$ (the circuit's DLA). In particular, the rank of the QFIM is upper bounded by $\dim(\liea)$. Hence, one can potentially reach the overparametrization regime if the QNN has $\sim\dim(\liea)$ parameters. Clearly, if $\dim(\liea)\in\Omega(\rm{exp}(n))$ (as is the case of controllable unitaries) then one cannot efficiently overparametrize the QNN. More interesting, however, are the cases when $\dim(\liea)\in\OC(\poly(n))$, such as those arising in~\cite{larocca2021diagnosing,schatzki2022theoretical}. We remark that while we have defined overparametrization as the regime where the number of parameters is such that the QFIM saturates its achievable rank in at least one point in the landscape, in practice one finds that the rank saturates simultaneously throughout most of the landscape (which is what brings about a computational phase transition)~\cite{larocca2021theory}.

To finish, we note that in Ref.~\cite{larocca2021theory} it was also shown that Definition~\ref{def:overparametrization} has operational meaning in terms of the capacity of the QNN~\cite{haug2021capacity,abbas2020power}. We recall that the capacity (or power) of a QNN is used to quantify the breadth of functions
that it can capture~\cite{coles2021seeking}. For instance, let us consider the capacity measure of~\cite{haug2021capacity}, which defines the effective quantum dimension of a QNN as
\begin{equation}\label{eq:eff_dim_1}   D_1(\thv)=\mathbb{E}\left[\sum_{m=1}^M\ZC(\lambda^{m}(\thv))\right]\,.
\end{equation}
Here $\lambda^{m}(\thv)$ are the eigenvalues of the QFIM for the state $\ket{\psi(\thv)}$, and $\ZC(x)$ is a function such that $\ZC(x)=0$ for $x=0$, and $\ZC(x)=1$ for $x\neq0$. Moreover, the expectation value is taken over the probability distribution  from which states are sampled from $\SC$.  It is straightforward to see that for a single-state dataset, $D_1(\thv)=\rank[F(\ket{\psi(\thv)})]$. An alternative definition for the capacity of a QNN can be found in~\cite{abbas2020power}. In the limit of large datasets, i.e., when $|\SC|\rightarrow\infty$, the effective quantum dimension of~\cite{abbas2020power} converges to
\begin{equation}\label{eq:eff_dim_2}
    D_2=\max_{\thv}\left(\rank\left[I(\ket{\psi(\thv)})\right]\right)\,,
\end{equation}
where $I(\ket{\psi(\thv)})$ is the classical Fisher Information matrix, defined as 
\small
\begin{equation}    I(\ket{\psi(\thv)})=\mathbb{E}\left[\frac{\partial\log(p(\ket{\psi},y;\thv))}{\partial\thv}\frac{\partial\log(p(\ket{\psi},y;\thv))}{\partial\thv}^T\right].
\end{equation}
\normalsize 
Here, $p(\ket{\psi},y;\thv)$, describes the joint relationship between an input $\ket{\psi}$ and an output $y$ of the QNN. In addition, the expectation value is taken over the probability distribution that samples input states from the dataset. As shown in~\cite{larocca2021theory} the model's capacity, as quantified by the effective dimensions of Eqs.~\eqref{eq:eff_dim_1} or~\eqref{eq:eff_dim_2}, is upper bounded as
\begin{equation}
    D_1(\thv)\leq \dim(\liea),\quad D_2\leq \dim(\liea)\,.
\end{equation}
Moreover, one can show that when the QNN is overparametrized, $D_1(\thv)$ achieves its maximum value. This shows that overparametrizing a QNN is equivalent to saturating its capacity.

\subsection{Quantum noise preliminaries}

Quantum noise refers to the uncontrolled errors that occur when implementing a QNN on quantum hardware. Such errors may arise from a wide variety of sources, such as imperfections when implementing gates or when performing measurements, undesired qubit-qubit couplings or unwanted interactions between the qubits and their environment.

In this work, we model the action of  the hardware noise  present throughout 
a QNN  by considering that noise channels $\NC_m$ act before and after each unitary $U_m(\theta_m)$  (see Fig.~\ref{fig:circs}). Here, we recall the definition of a unital Pauli channel.
\begin{definition}[Unital Pauli channel] \label{def:pauli_channel}
A unital Pauli channel is a CPTP map $\NC$ whose action on an operator $\rho$ is given by 
\begin{equation} \label{eq:pauli_channel_prob}
    \NC(\rho) = \sum_{\vec{\alpha}\vec{\beta}}  p_{\vec{\alpha}\vec{\beta}} X^{\vec{\alpha}}Z^{\vec{\beta}} \rho Z^{\vec{\beta}} X^{\vec{\alpha}}\,,
\end{equation}
where $\{p_{\vec{\alpha}\vec{\beta}}\}$ is a probability distribution (i.e.,  $p_{\vec{\alpha}\vec{\beta}}\geq 0$ and $\sum_{\vec{\alpha}\vec{\beta}} p_{\vec{\alpha}\vec{\beta}} =1$), and $X^{\vec{\alpha}} Z^{\vec{\beta}} \coloneqq X^{\alpha_1}Z^{\beta_1}\otimes ... \otimes X^{\alpha_n}Z^{\beta_n}$, where $\alpha_1, \dots, \alpha_n, \beta_1, \dots, \beta_n \in \{0, 1\}$.
\end{definition}

In other words, a unital (identity preserving) Pauli channel consists of Pauli operators applied randomly according to a certain probability distribution. It is easy to see that it is diagonal in the Pauli basis. That is, its action maps a Pauli operator  $X^{\vec{\alpha}'} Z^{\vec{\beta}'}$  onto itself as $\NC( X^{\vec{\alpha}'} Z^{\vec{\beta}'} ) = c_{\vec{\alpha}'\vec{\beta}'} X^{\vec{\alpha}'} Z^{\vec{\beta}'}$, where $c_{\vec{0}\vec{0}}=1$ and $-1\leq c_{\vec{\alpha}'\vec{\beta}'}\leq 1$ for all $\vec{\alpha}'$ and $\vec{\beta}'$. Indeed,
\begin{equation} \begin{split}
    \NC(X^{\vec{\alpha}'} Z^{\vec{\beta}'}) &= \sum_{\vec{\alpha}\vec{\beta}} p_{\vec{\alpha}\vec{\beta}} X^{\vec{\alpha}}Z^{\vec{\beta}} X^{\vec{\alpha}'} Z^{\vec{\beta}'} Z^{\vec{\beta}} X^{\vec{\alpha}}  \\ & = X^{\vec{\alpha}'} Z^{\vec{\beta}'} \underbrace{\sum_{\vec{\alpha}\vec{\beta}} (-1)^{\vec{\alpha'}\cdot\vec{\beta}}  (-1)^{\vec{\alpha}\cdot\vec{\beta'}} \,p_{\vec{\alpha}\vec{\beta}}}_{c_{\vec{\alpha'}\vec{\beta'}}} \,,\end{split}
\end{equation}
where we used the following properties,
\begin{equation}\label{eq-pauli_prop}
[X^{\vec{\alpha}}, X^{\vec{\alpha}'}]=0,\quad\![Z^{\vec{\beta}},Z^{\vec{\beta}'}]=0,\quad \!X^{\vec{\alpha}}Z^{\vec{\beta}}=(-1)^{\vec{\alpha}\cdot\vec{\beta}}Z^{\vec{\beta}}X^{\vec{\alpha}},\!\nonumber
\end{equation}
together with the fact that the square of a Pauli operator is  equal to the identity. Note that  $c_{\vec{0}\vec{0}}=1$ implies that a unital Pauli noise channel maps the identity operator onto itself, which is a necessary and sufficient condition for a diagonal superoperator to be trace preserving. In what follows we will assume that $c_{\vec{\alpha'}\vec{\beta'}}\in(-1,1)$ for all $\vec{\alpha'}$ and $\vec{\beta'}$ (this is necessary for Lemma~\ref{lem-renyi} in Appendix~\ref{ap:lemm} to hold, i.e., for the identity operator to be the only fixed point of the noisy channel).

Pauli unital noise includes, as a special case,
\textit{local depolarizing noise}, which acts on each qubit $j\in[1,n]$ as~\cite{wilde2013quantum}
\begin{align}\label{eq:local_depolarizing}
    \NC^{Depol}_j(\rho)&=\left(1-\frac{3p}{4}\right)\rho+\frac{p}{4}\left(X_j\rho X_j+Y_j\rho Y_j+Z_j\rho Z_j\right)\,,\nonumber\\
    &=(1-p)\rho + p\frac{\id_j\otimes\Tr_{j}[\rho]}{2}\,.
\end{align}
Here, $ 0< p \leq 1$ denotes the probability of depolarization, and $X_j$, $Y_j$ and $Z_j$  are  Pauli operators acting on the $j$-th qubit. Moreover, $\Tr_{j}$ indicates the partial trace over qubit $j$. Similarly, we can construct an $n$-qubit channel consisting of a local depolarizing channel acting on each qubit as 
\begin{equation}\label{eq:local-depol-noise}
\NC^{Depol}_{loc}(\rho)=\bigotimes_{j=1}^n \NC^{Depol}_j(\rho)\,,
\end{equation}
or the \textit{global depolarizing channel}, whose action is
\begin{equation} \label{eq:global_dep}
    \NC^{Depol}(\rho)=(1-p)\rho + p\frac{\id}{d}\,,
\end{equation}
where $0<p\leq 1$. 
Other examples of Pauli noise channels include
bit- and phase-flip channels, 
as well as $T_2$ processes (i.e., the dephasing channel is a unital Pauli channel). 

\begin{figure}[t!]
    \centering
    \includegraphics[width=.8\linewidth]{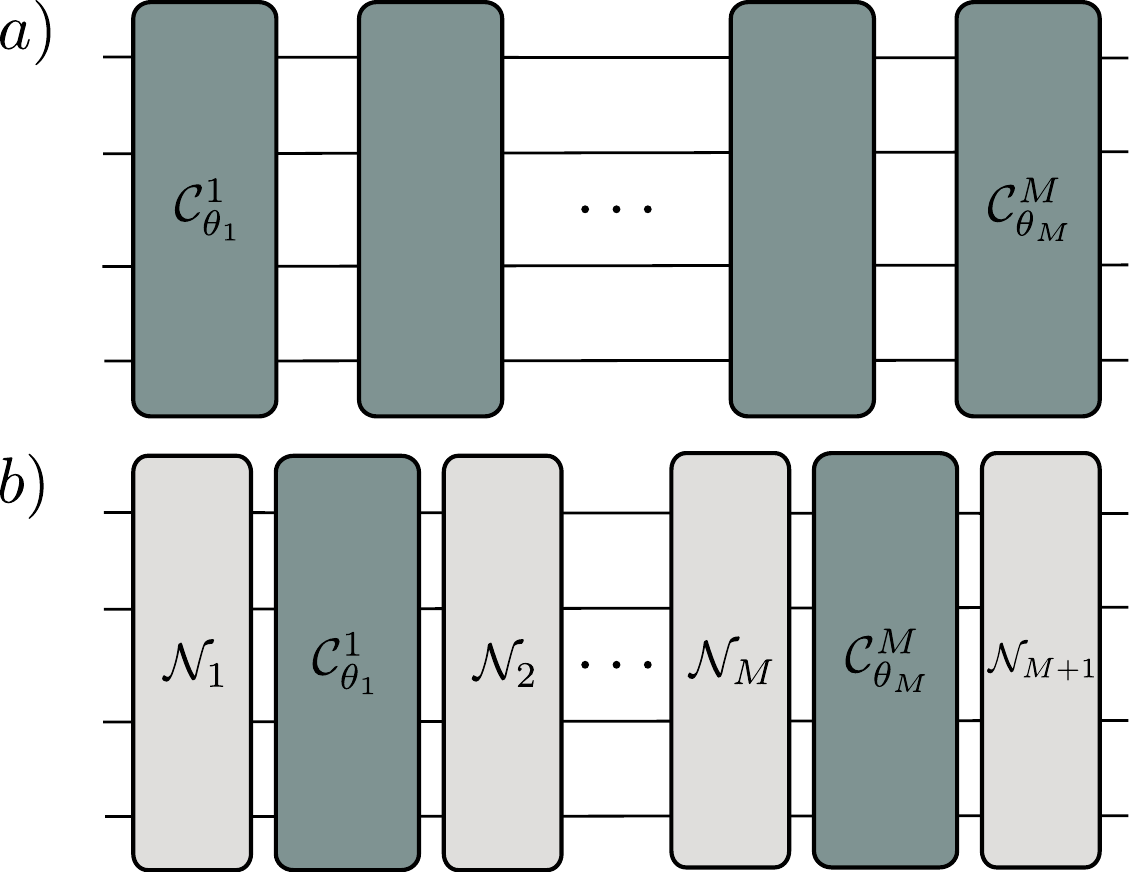}
    \caption{{\bf Noiseless and noisy quantum circuits.} a) Noiseless quantum circuit consisting of parametrized unitary channels $\CC^m_{\theta_m}$. b) Noisy quantum circuit where unital Pauli noise channels $\NC_m$ are interleaved with the unitary channels.}
    \label{fig:circs}   
\end{figure}

As shown in Fig.~\ref{fig:circs}, in the presence of quantum noise the action of the QNN is modeled by noise channels interleaved with the unitary channels. Hence, the output of the noisy QNN is given by
\begin{equation}\label{eq:noisy-channel}
   \widetilde{\rho}_{\thv}= \widetilde{\CC}_{\thv}(\rho)=\NC_{M+1}\circ\CC^M_{\theta_M}\circ\NC_{M}\circ\cdots\circ\NC_2\circ\CC^1_{\theta_1}\circ\NC_1(\rho)\,,
\end{equation}
for some (potentially layer-dependent) noise channels $\NC_m$, with $m=1,\ldots,M+1$.

\subsection{Mixed-state quantum Fisher information matrix}

As previously discussed, in the presence of noise the quantum states  evolving through the circuit become mixed, and we must extend the formula of the QFIM in Eq.~\eqref{eq:QFIM-elem} to account for this. Following the same program that led to the QFIM for pure states, we can define a mixed-state QFIM as an expansion of the \textit{Bures distance}, which is a measure of distinguishability between mixed states. The Bures distance  is defined as
\begin{equation}\label{eq:bures}
    \BC(\rho,\sigma)=2\left(1-\sqrt{\FC(\rho,\sigma)}\right)\,,
\end{equation} 
where $\FC(\rho,\sigma)$ is the Ulhmann fidelity~\cite{uhlmann1976transition}
\begin{equation}
\FC(\rho,\sigma)=\left(\Tr[\sqrt{\sqrt{\rho}\sigma\sqrt{\rho}}]\right)^2\,.
\end{equation}

Let $\rho_{\thv}$ be a parametrized mixed state. And let its  spectral decomposition be
\begin{equation}
    \rho_{\thv}=\sum_{\mu=1}^d r_\mu \dya{r_\mu}\,,
\end{equation}
where $\{r_\mu\}_{\mu=1}^d$ are the eigenvalues of $\rho_{\thv}$ (such that $r_\mu\geq 0$ for all $\mu$, and $\sum_\mu r_\mu=1$), and $\{\ket{r_\mu}\}_{\mu=1}^d$ are the associated eigenvectors\footnote{For simplicity, we omit the $\thv$ dependency in $r_\mu$ and $\ket{r_\mu}$.}. Then, a second order expansion of the Bures distance between ${\rho}_{\thv}$ and ${\rho}_{\thv+\vec{\delta}}$  leads to the mixed state QFIM (which reduces to Eq.~\eqref{eq:QFIM-elem} when $\rho_{\thv}$ is pure), whose entries are~\cite{liu2014fidelity,meyer2021fisher}  
\small
\begin{align}\label{eq:QFIM-mixed-elem-si}
[F(\rho_{\thv})]_{ij}&=
\sum_{\substack{\mu,\nu\\ r_{\mu}+ r_{\nu}\neq0}}
\frac{2\Re[\bra{r_{\mu}}\partial_i\rho_{\thv}\ket{r_{\nu}}\bra{r_{\nu}}\partial_j\rho_{\thv}\ket{r_{\mu}}]}{r_\mu+r_\nu}\\
 &=\sum_{\substack{\mu\\ r_\mu\neq 0}}\left(\frac{(\partial_i{r}_\mu)(\partial_j{r}_\mu)}{{r}_\mu}+4{r}_\mu\Re\left[\langle\partial_i{r}_\mu|\partial_j{r}_\mu\rangle\right]\right)\nonumber\\
&\quad -\sum_{\substack{\mu,\nu\\ r_\mu+ {r}_\nu\neq0}}\frac{8{r}_\mu{r}_\nu}{{r}_\mu+{r}_\nu}\Re\left[\langle\partial_i{r}_\mu|{r}_\nu\rangle\langle {r}_\nu|\partial_j{r}_\mu\rangle\right]\,.
\end{align}
\normalsize

Here we recall a few properties of the QFIM which will be used below, and we refer the reader to~\cite{liu2014fidelity} for their  proof.
\begin{enumerate}
    \item $F$ is a symmetric matrix: $F^T=F$.
    \item $F$ is positive semi-definite: $F\geq 0$.
    \item  $F$ is convex: For any pair of states $\rho_{\thv}$ and $\sigma_{\thv}$ and for $0\leq q \leq 1$ we have $F(q\rho_{\thv}+(1-q)\sigma_{\thv})\leq qF(\rho_{\thv})+(1-q)F(\sigma_{\thv})$.
    \item $F$ is invariant under unitary transformations: $F(U\rho_{\thv}U\ad)=F(\rho_{\thv})$ for any $U\in\UC(d)$.
    \item $F$ is non-increasing under quantum channels: If $\Phi$ is a quantum channel, then $F(\Phi(\rho_{\thv}))\leq F(\rho_{\thv})$.
\end{enumerate}

\section{Results}
\label{sec:main_results}

The previous section reviewed the results of Ref.~\cite{larocca2021theory}, which analyzed the overparametrization phenomenon when no hardware noise is present. However, in a realistic scenario where the QNN is implemented on a near-term quantum device~\cite{preskill2018quantum} we can expect that quantum noise will act throughout the circuit. Therefore, in what follows we set out to study how the results of Ref.~\cite{larocca2021theory} change when noise is considered. For simplicity, we will study the case when the dataset is composed of a single mixed state $\SC=\{\rho\}$. The extension of our results to  multi-state datasets is straightforward, as one simply needs to follow  the approach taken in~\cite{larocca2021theory}.

\begin{figure*}[t!]
    \centering
    \includegraphics[width=.6\linewidth]{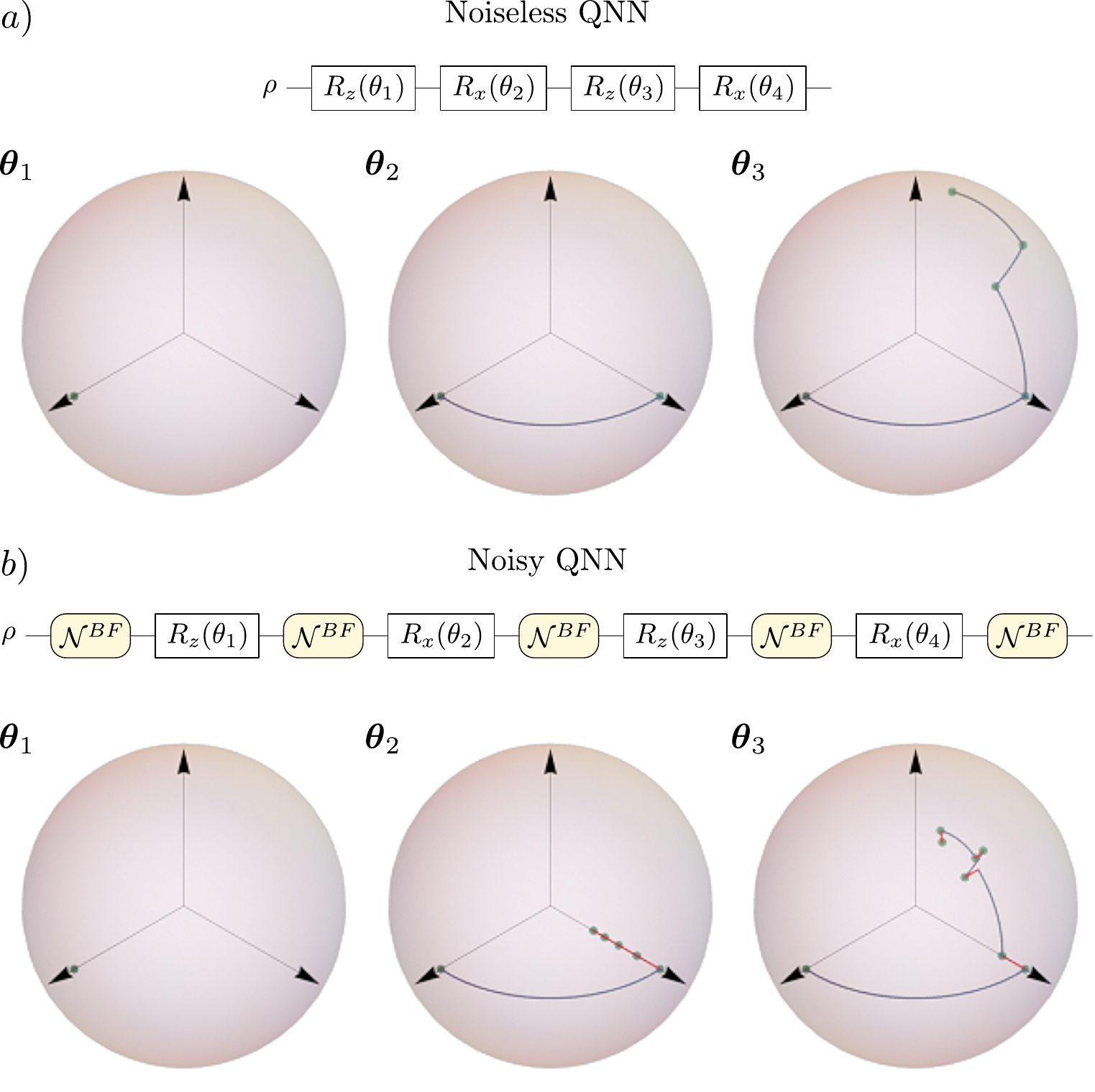}
    \caption{{\bf Single-qubit toy model examples.} a) We consider the case where the single qubit state of Eq.~\eqref{eq:si-state} is sent through a noiseless QNN with four parameters as in Eq.~\eqref{eq:si_qnn}. We plot in the Bloch sphere the three trajectories defined by $\thv_1$, $\thv_2$ and $\thv_3$. b) We consider the case where the single qubit state of Eq.~\eqref{eq:si-state} is sent through a noisy QNN with four parameters, as in Eq.~\eqref{eq:si-noisy-qnn}. Here,  bit-flip noise channels act before and after every gate with probability $p=0.1$. We plot in the Bloch sphere the three trajectories defined by $\thv_1$, $\thv_2$ and $\thv_3$. The action of the unitary gates is marked in blue, whereas the action of the noise channels is marked in red.}
    \label{fig:noise}
\end{figure*}

\subsection{Single-qubit toy model}
\label{sec:single-qubit}

We start with a simple toy model that will help us gather intuition on the effects that noise may have on the rank and the eigenvalues of the QFIM, and hence on the QNNs' overparametrization. As we will show, we can expect that  presence of quantum noise will generally: i) Increase the rank of the QFIM, and ii) Decrease the overall magnitude of the QFIM eigenvalues.  To illustrate these two phenomena, we consider a simple single-qubit model undergoing bit-flip noise. The setup is as follows. First, we initialize the state of the single qubit to
\begin{equation}\label{eq:si-state}
    \rho=0.9\dya{+}+0.1\frac{\id}{2}\,.
\end{equation}
We choose a full rank state to avoid issues in the QFIM (namely, discontinuities in its entries) arising from a change in the rank of the state~\cite{vsafranek2017discontinuities,seveso2019discontinuity}. Then, this state is sent through a circuit composed of four single qubit rotations
\begin{equation}
    U(\thv)=e^{-i \theta_4 X/2}e^{-i \theta_3 Z/2}e^{-i \theta_2 X/2}e^{-i \theta_1 Z/2}\,.
\end{equation}
This setup is depicted in Fig.~\ref{fig:noise}(a, top). In channel notation, this QNN is expressed as the concatenation of four unitary channels
\begin{equation}\label{eq:si_qnn}
   \CC_{\thv}= \CC^X_{\theta_4}\circ\CC^Z_{\theta_3}\circ\CC^X_{\theta_2}\circ\CC^Z_{\theta_1}\,,
\end{equation}
where $\CC^Z_{\theta}(\rho)=e^{-i \theta Z/2}\rho e^{i \theta Z/2}$, and analogously for $\CC^X_{\theta}(\rho)$.

The generators of the QNN are the Pauli matrices $\GC=\{X,Z\}$, and it is straightforward to check that the DLA is simply $\liea=\spn\{iX,iY,iZ\}\iso\mathfrak{su}(2)$, meaning that the QNN is \textit{universal} or \textit{controllable}~\cite{dalessandro2010introduction,larocca2021diagnosing}. Moreover, we can see that the maximum possible rank of the QFIM is $\max_{\thv}\left(\rank\left[F(\rho_{\thv})\right]\right)=2$, as the state lives on a two-dimensional shell inside of the Bloch sphere. As such, it is clear that the QNN is already overparametrized, since the maximum attainable rank of the QFIM is smaller than the number of parameters. 
To exemplify how noise affects the QNN, we evaluate the QFIM at three different sets of parameter values,
\begin{enumerate}
    \item $\thv_1=\{0,0,0,0\}$, leading to $\rank\left[F(\rho_{\thv_1})\right]=1$,
    \item $\thv_2=\{\frac{\pi}{2},0,0,0\}$, leading to $\rank\left[F(\rho_{\thv_2})\right]=2$,
    \item $\thv_3=\{\frac{\pi}{2},\frac{\pi}{4},\frac{\pi}{4},\frac{\pi}{4}\}$, leading to $\rank\left[F(\rho_{\thv_3})\right]=2$.
\end{enumerate}
The (noiseless) trajectories corresponding to these choices are presented in Fig.~\ref{fig:noise}(a, bottom). While the rank of the QFIM is indeed saturated at $\thv_2$ and $\thv_3$, for $\thv_1$ we have $\rank\left[F(\rho_{\thv_1})\right]=1$ (this follows from $R_z(0)\rho R_z(0)^\dagger$ and $R_z(0)R_x(0)R_z(0)\rho R_z(0)^\dagger R_x(0)^\dagger R_z(0)^\dagger$ being eigenstates of $R_x$). We have thus added this example to showcase the important role that the interplay between the initial state and the QNN parameters has in determining the rank of the QFIM.

\begin{figure}[t!]
    \centering
    \includegraphics[width=1\linewidth]{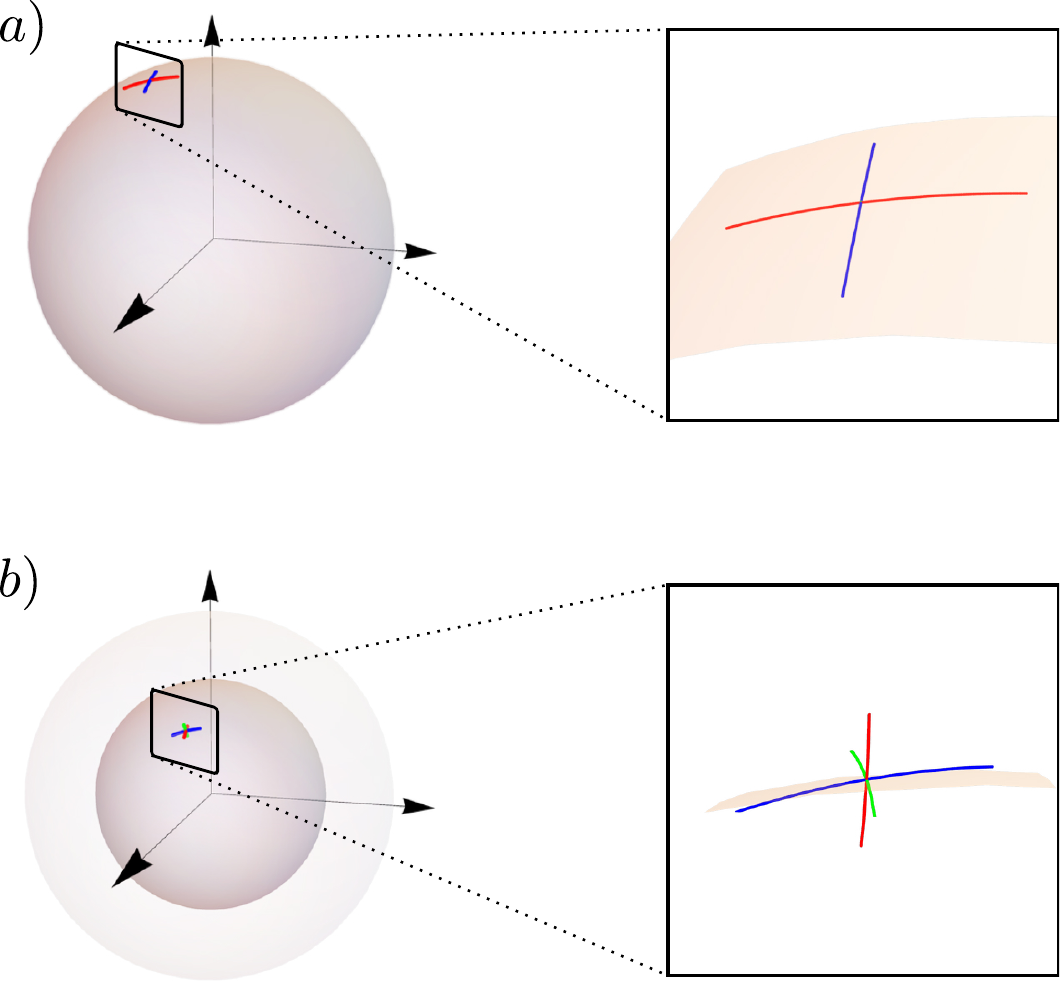}
    \caption{{\bf State space trajectories following perturbations along QFIM eigendirections.} a) We consider that the single qubit state of Eq.~\eqref{eq:si-state} is sent through a noiseless QNN as in Eq.~\eqref{eq:si_qnn}, with parameters $\thv_3$. Here we show within the Bloch sphere how the state $\rho_{\thv}$ changes when the parameters are varied following the directions given by  three eigenvectors of the QFIM $F(\rho_{\thv})$. Two such directions are associated with the two non-zero eigenvalues (blue and red curves) and with a zero eigenvalue (green, non-visible, curve). b) We consider that the single qubit state of Eq.~\eqref{eq:si-state} is sent through a noisy QNN as in Eq.~\eqref{eq:si-noisy-qnn}, with parameters $\thv_3$. Here we show within the Bloch sphere how the state $\rho_{\thv}$ changes when the parameters are varied following the directions given by  the three eigenvectors of the QFIM $F(\widetilde{\rho}_{\thv})$ with associated non-zero  eigenvalues (blue, red, and green curves).    }
    \label{fig:directions}
\end{figure}

Now, let us consider the case where bit-flip noise channels act before and after every unitary gate in the circuit (see Fig.~\ref{fig:noise}(b, top) for a schematic portrayal of the setup). For convenience, we recall that the bit-flip channel is a special case of Pauli noise of the form
\begin{equation}
    \NC^{BF}(\rho)=(1-p)\rho + pX\rho X\,, \quad \quad 0< p \leq 1\,,
\end{equation}
such that the noisy QNN channel becomes 
\begin{equation}\label{eq:si-noisy-qnn}
   \widetilde{\CC}_{\thv}= \NC^{BF}\circ\CC^X_{\theta_4}\circ\NC^{BF}\circ\CC^Z_{\theta_3}\circ\NC^{BF}\circ\CC^X_{\theta_2}\circ\NC^{BF}\circ\CC^Z_{\theta_1}\circ\NC^{BF}\,.
\end{equation}
A direct evaluation of the QFIM rank at the sets of parameter values previously considered reveals that 
\begin{enumerate}
    \item $\thv_1=\{0,0,0,0\}$, leads to $\rank\left[F(\rho_{\thv_1})\right]=1$,
    \item $\thv_2=\{\frac{\pi}{2},0,0,0\}$, leads to $\rank\left[F(\rho_{\thv_2})\right]=2$,
    \item $\thv_3=\{\frac{\pi}{2},\frac{\pi}{4},\frac{\pi}{4},\frac{\pi}{4}\}$, leads to $\rank\left[F(\rho_{\thv_3
    })\right]=3$.
\end{enumerate}
The trajectories defined by these rotations are presented in Fig.~\ref{fig:noise}(b, bottom), for a value of $p=0.1$. Here we can see that for $\vec{\theta}_1$ and $\thv_2$ the rank of the QFIM is not increased. While it is obvious that for $\thv_1$ the noise does not change the output state of the QNN (as $\rho$ is a fixed point of the noise model), for $\thv_2$ the noise channels do change the output state of the QNN. Notably, we can see that all the noise channels are effectively applied at the end of the parametrized evolution in both cases. As we will show below, this implies that they cannot change the rank of the QFIM (see Theorem~\ref{theo:noise-end}). Finally, for $\thv_3$ the noise does increases the rank of the QFIM from two to three. Here, the rank of the QFIM is maximal, which follows from  the state evolving in the three-dimensional Bloch sphere. 

\begin{figure}[t!]
    \centering
    \includegraphics[width=.7\linewidth]{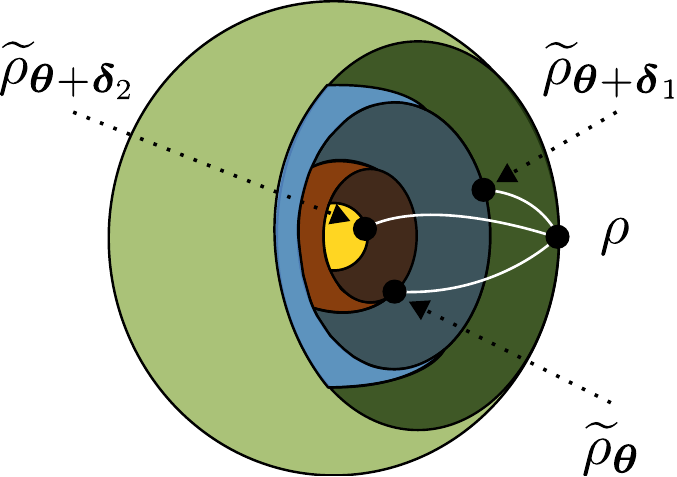}
    \caption{{\bf Purity and new directions in state-space.} Here, $\rho$ is a single-qubit  input state to a noisy QNN. Then, let $\rho_{\thv}$, $\rho_{\thv+\vec{\delta}_1}$ and $\rho_{\thv+\vec{\delta}_2}$ be the output states when the QNN parameters are $\thv$, $\thv+\vec{\delta}_1$ and $\thv+\vec{\delta}_2$, respectively. As schematically shown,  the parameters $\thv+\vec{\delta}_1$ ($\thv+\vec{\delta}_2$)  lead to the output state  $\rho_{\thv+\vec{\delta}_1}$ ($\rho_{\thv+\vec{\delta}_2}$) having more (less) purity than $\rho_{\thv}$, as the output state is farther (closer) to the center of the Bloch sphere. Note that in all cases, the output states $\rho_{\thv}$, $\rho_{\thv+\vec{\delta}_1}$ and $\rho_{\thv+\vec{\delta}_2}$  are less pure than the input state $\rho$ due to the presence of noise.  }
    \label{fig:models}   
\end{figure}

The fact that for $\thv_3$ the rank of the QFIM is increased indicates that the presence of noise enables a new direction in state space.   As shown in Fig.~\ref{fig:directions}(a), in the absence of noise (and hence when the rank of the QFIM is two), there are only two available directions in state space. These directions are depicted as blue and red lines corresponding to the trajectories followed by the state when the parameters are changed along the  directions dictated by the eigenvectors of the QFIM with non-zero eigenvalues. Since the channel is unitary, these trajectories lie on the surface of a (fixed-purity) shell of the Bloch sphere. We have also verified that, as expected, the state remains unchanged when the parameters are varied along a direction corresponding to an eigenvector associated to a null eigenvalue of the QFIM (although the previous cannot be visualized in the plot because the initial and final state of the evolution are the same). 

On the contrary, as shown in Fig.~\ref{fig:directions}(b), when noise acts throughout the circuit (and hence when the rank of the QFIM is three), there are three available directions in state space. Here,  the red, blue and green curves correspond to the trajectories that the state follows when changing the parameters along the directions given by the three eigenvectors of the QFIM with associated non-zero eigenvalue. Crucially, we can now see that there exists a direction (the blue curve) that preserves the purity of the quantum state. The other two directions, however, can both increase and decrease the purity of the output state. This is  evidenced from the fact that the trajectories in the state space move inwards and outwards from the fixed-purity shell in the Bloch sphere.

We find it important to remark that while some of the directions in state space can change the purity of the state $\widetilde{\rho}_{\thv}$, this does not imply that the QNN is purifying the state. As shown in Fig.~\ref{fig:directions}(b), by perturbing the parameters $\thv$ along a direction $\vec{\delta}$  one can move the \textit{final state} inwards or outwards in the Bloch sphere. That is, we can decrease or increase the purity of $\widetilde{\rho}_{\thv+\vec{\delta}}$ \textit{with respect to that of} $\widetilde{\rho}_{\thv}$, see Fig.~\ref{fig:models}. However, this does not imply that $\widetilde{\rho}_{\thv+\vec{\delta}}$ has less entropy than the initial state $\rho$. In other words, changing the variational parameters by $\vec{\delta}$ implies preparing again the initial state $\rho$ and applying $\widetilde{\CC}_{\thv+\vec{\delta}}$, not evolving from $\widetilde{\rho}_{\thv}$ to   $\widetilde{\rho}_{\thv+\vec{\delta}}$. Physically, we can interpret the previous as saying  that the state evolving under the noisy QNN $\widetilde{\CC}_{\thv+\vec{\delta}}$ is less sensitive to noise than that evolving under $\widetilde{\CC}_{\thv}$, and hence its purity gets less degraded by noise. 

\begin{figure}[t!]
    \centering
    \includegraphics[width=.48\textwidth]{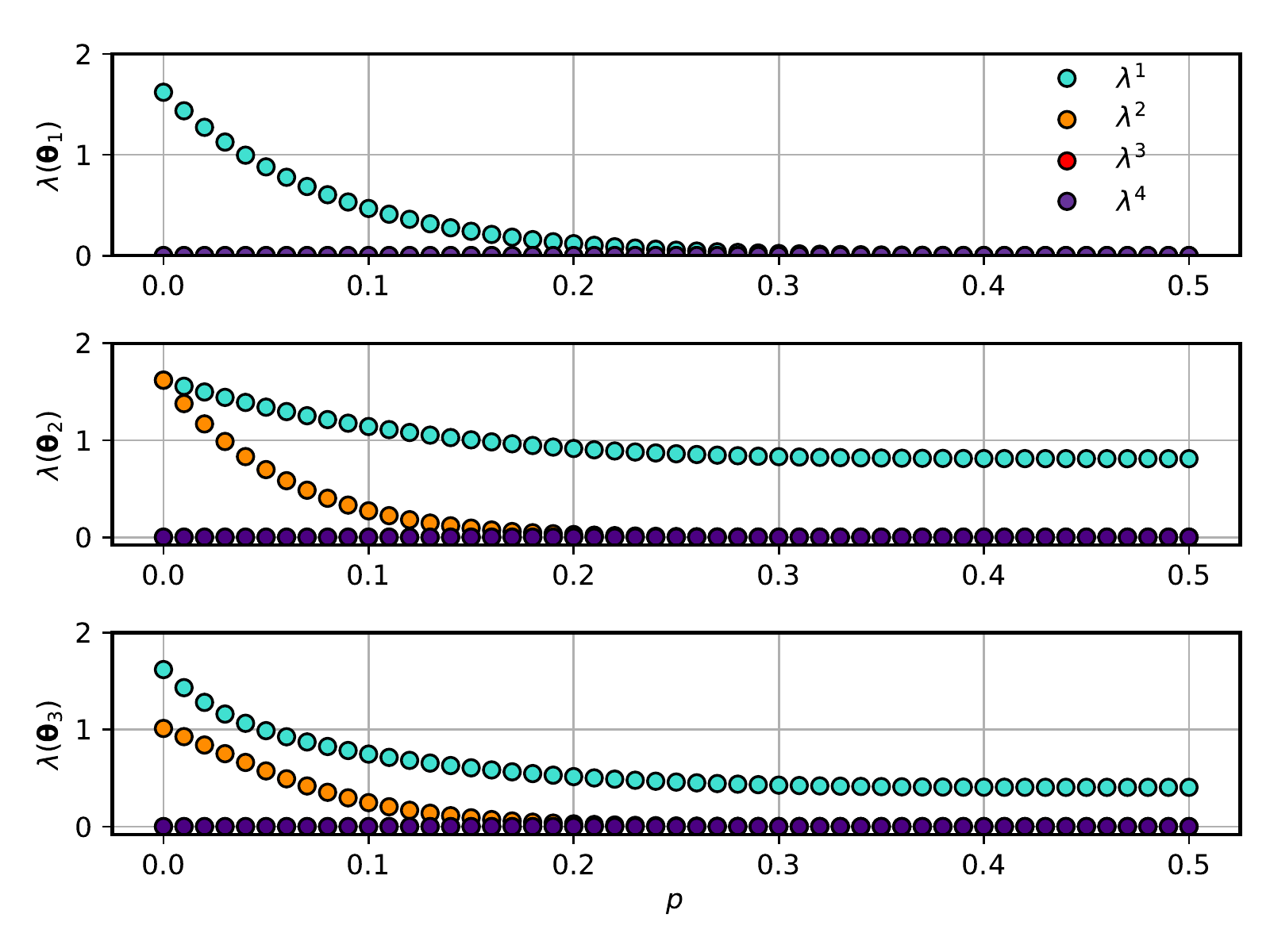}
    \caption{{\bf Eigenvalues of the QFIM versus noise levels.} Here we consider the case where  the noisy QNN from Eq.~\eqref{eq:si-noisy-qnn} acts on the initial single-qubit state of Eq.~\eqref{eq:si-state}. We plot the magnitude of the eigenvalues of the QFIM for the single-qubit toy model  versus the probability of bit-flip error $p$. The top, middle and bottom panels respectively correspond to the parameter values $\thv_1, \thv_2, \thv_3$.}
    \label{fig:eigvals_single_qubit}
\end{figure}

Next, let us evaluate how noise affects the  magnitude of the QFIM eigenvalues. In Fig.~\ref{fig:eigvals_single_qubit} we plot the eigenvalues of the QFIM versus the probability of a bit-flip error $p$. In this case, it is manifest that for all the parameter values considered above, the magnitude of the non-zero eigenvalues decreases with $p$. Crucially, the previous holds  not only for the eigenvalues of the QFIM that were non-zero in the noiseless setting,  but also for those that the noise ``turns on''. This result indicates that the state's sensitivity to parameter changes decreases with increasing noise levels. As we will prove below (see Theorems~\ref{lem:exp-supression} and~\ref{lem:exp-supp-pauli}), this is a general consequence of the presence of noise in a QNN.

\subsection{Global depolarizing noise}
\label{sec:global}

Here we study the overparametrization of general QNNs acting on an $n$-qubit state under a simple noise model: Global depolarizing noise (see Eq.~\eqref{eq:global_dep}).
We henceforth assume that global depolarizing noise channels act before and after every unitary channel in the QNN with the same probability $p$. That is, we consider the case when 
\small
\begin{align}
   \widetilde{\rho}_{\thv} &= \NC^{Depol}\circ\CC^M_{\theta_M}\circ\NC^{Depol}\circ\cdots\circ\NC^{Depol}\circ\CC^1_{\theta_1}\circ\NC^{Depol}(\rho)\nonumber\,.
\end{align}
\normalsize
We remark here that since the noise channel $\NC^{Depol}$ acts before the first parametrized gate in the circuit, $\CC^1_{\theta_1}$, we avoid the change of rank in the quantum state (from a pure state to a full-rank state) that would occur otherwise.
It is not hard to see that the action of the global depolarizing noise channels can be commuted through to the end of the circuit, so that 
\begin{align}
&\NC^{Depol}\circ\CC^M_{\theta_M}\circ\NC^{Depol}\circ\cdots\circ\NC^{Depol}\circ\CC^1_{\theta_1}\circ\NC^{Depol}\nonumber\\
&=\underbrace{\NC^{Depol}\circ\cdots\circ\NC^{Depol}}_{\times (M+1)}\circ \CC^M_{\theta_M}\circ\cdots\circ\CC^1_{\theta_1}\nonumber\\
&=\NC^{Depol}_{eff}\circ \CC^M_{\theta_M}\circ\cdots\circ\CC^1_{\theta_1}\label{eq:eff-depol}\,.
\end{align}
Here, we have defined 
\begin{equation}\label{eq:depol-eff-def}
    \NC_{eff}^{Depol}(\rho)=(1-p)^{M+1}\rho+\left(1-(1-p)^{M+1}\right)\frac{\id}{d}\,.
\end{equation}
This shows that we can express the output state of the QNN as 
\begin{align}
\widetilde{\rho}_{\thv}
&=(1-p)^{M+1}\CC_{\thv}(\rho)+ \left(1-(1-p)^{M+1}\right)\frac{\id}{d}\nonumber\\
&=(1-p)^{M+1}\rho_{\thv}+\left(1-(1-p)^{M+1}\right)\frac{\id}{d}\,.\label{eq:depol-noise-state}
\end{align}

More generally,  we note that the previous result can be extended to the case when global depolarizing noise channels with different layer-dependent probabilities $p_m$ act in between the gates of the circuit. In particular, one now finds
\begin{equation}
    \NC_{eff}^{Depol}(\rho)=\prod_{m=1}^{M+1}(1-p_m)\rho + \left(1-\prod_{m=1}^{M+1}(1-p_m)\right)\frac{\id}{d}\,.
\end{equation}

Next, we study how the rank of the QFIM and the magnitude of its eigenvalues change  due to the presence of global depolarizing noise. In this context, we find it convenient to  first present a useful theorem.
\begin{theorem}\label{theo:noise-end}
Consider the case when a single noise channel acts at the end of the QNN as
\begin{equation}
\widetilde{\CC}_{\thv}=\NC\circ\CC^M_{\theta_M}\circ\cdots\circ\CC^1_{\theta_1}\,.
\end{equation}
 The rank of the QFIM cannot be increased by the action of the noise. That is
\begin{equation}
    \rank[F(\widetilde{\rho}_{\thv})]\leq \rank[F(\rho_{\thv})]\,,
\end{equation}
where $\rho_{\thv}$ and $\widetilde{\rho}_{\thv}$ respectively denote the output states of the noiseless and noisy QNNs (see Eqs.~\eqref{eq:noiseless_state} and~\eqref{eq:noisy-channel}).
\end{theorem}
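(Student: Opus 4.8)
The plan is to reduce the statement to the monotonicity of the QFIM under parameter-independent quantum channels (Property 5 in the list above) together with an elementary fact about kernels of positive semi-definite matrices. The crucial observation is that, since the noise channel $\NC$ acts \emph{after} all the parametrized gates, it carries no $\thv$-dependence; hence the noisy output is simply $\widetilde{\rho}_{\thv}=\NC(\rho_{\thv})$, obtained by applying one fixed channel to the noiseless parametrized family $\rho_{\thv}$.

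First I would invoke Property 5 with $\Phi=\NC$. Because $\NC$ is independent of the parameters, differentiation commutes with the channel, $\partial_i\widetilde{\rho}_{\thv}=\NC(\partial_i\rho_{\thv})$, so the hypotheses of the monotonicity property are met and we obtain the matrix inequality
\begin{equation}
F(\widetilde{\rho}_{\thv})=F(\NC(\rho_{\thv}))\leq F(\rho_{\thv})\,,
\end{equation}
i.e.\ $F(\rho_{\thv})-F(\widetilde{\rho}_{\thv})$ is positive semi-definite.

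Next I would translate this matrix inequality into the desired rank inequality. Writing $A=F(\widetilde{\rho}_{\thv})$ and $B=F(\rho_{\thv})$, both matrices are positive semi-definite by Property 2, and $A\leq B$. For any $\vec{v}\in\ker(B)$ we then have $0\leq \vec{v}^T A\,\vec{v}\leq \vec{v}^T B\,\vec{v}=0$, so $\vec{v}^T A\,\vec{v}=0$; since $A\geq0$ this forces $A\vec{v}=0$, i.e.\ $\vec{v}\in\ker(A)$. Hence $\ker(B)\subseteq\ker(A)$, and by rank--nullity $\rank(A)\leq\rank(B)$, which is exactly $\rank[F(\widetilde{\rho}_{\thv})]\leq\rank[F(\rho_{\thv})]$.

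The argument is short, and what must be handled with care is that the monotonicity of Property 5 genuinely applies: it hinges entirely on $\NC$ being $\thv$-independent, which is precisely why the theorem is stated for a single channel placed at the very end of the circuit (a $\thv$-dependent channel would not commute with the derivatives and the inequality could fail). The only other subtlety is the passage from the matrix inequality to the kernel containment, which must use positive semi-definiteness, since the implication $\vec{v}^T A\,\vec{v}=0\Rightarrow A\vec{v}=0$ is false for general matrices. Physically, the result says that parameter-independent post-processing can only collapse, never create, accessible directions in state space---consistent with the numerics in Fig.~\ref{fig:noise} for $\thv_1$ and $\thv_2$, where the noise is effectively pushed to the end and the rank is not increased.
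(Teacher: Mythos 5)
Your proposal is correct and follows essentially the same route as the paper's own proof in Appendix~\ref{ap:A}: invoke monotonicity of the QFIM under the parameter-independent channel $\NC$ to get $F(\widetilde{\rho}_{\thv})\leq F(\rho_{\thv})$, then show the null space of $F(\rho_{\thv})$ is contained in that of $F(\widetilde{\rho}_{\thv})$ via positive semi-definiteness. If anything, you make explicit two steps the paper leaves implicit --- that $\vec{v}^T A\,\vec{v}=0$ with $A\geq 0$ forces $A\vec{v}=0$, and that the channel's $\thv$-independence is what licenses the monotonicity property --- so your write-up is a slightly more careful rendering of the same argument.
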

We refer the reader to Appendix~\ref{ap:A} for a proof of Theorem~\ref{theo:noise-end}. 
The key implication of this theorem is that if the noise acts exclusively at the end of the circuit, then the rank of the QFIM cannot be increased. Hence, it follows that one can overparametrize the noisy QNN, $\widetilde{\CC}_{\thv}$, with the same number of parameters needed to overparametrize the noiseless one, $\CC_{\thv}$.

Using Theorem~\ref{theo:noise-end} we can readily prove the following result for the case of global depolarizing noise.
\begin{theorem}\label{lem:rank-depol}
When global depolarizing channels act before and after every unitary in the QNN, then the rank of the QFIM cannot be increased by the action of the noise. 
\end{theorem}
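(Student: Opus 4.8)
The plan is to reduce the interleaved-noise setting to the single-channel-at-the-end setting already handled by Theorem~\ref{theo:noise-end}. First I would invoke the commutation property exploited in Eqs.~\eqref{eq:eff-depol}--\eqref{eq:depol-noise-state}: because a global depolarizing channel leaves the maximally mixed component $\id/d$ invariant under conjugation by any unitary (that is, $U(\id/d)U\ad=\id/d$), each $\NC^{Depol}$ commutes freely through the unitary channels $\CC^m_{\theta_m}$. Collecting all $M+1$ depolarizing layers at the end of the circuit collapses them into a single effective global depolarizing channel $\NC_{eff}^{Depol}$ with strength $(1-p)^{M+1}$, exactly as recorded in Eq.~\eqref{eq:depol-eff-def}.

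The upshot is that the noisy output state can be written exactly as $\rhot_{\thv}=\NC_{eff}^{Depol}(\rho_{\thv})$, where $\rho_{\thv}$ is the noiseless output state of Eq.~\eqref{eq:noiseless_state}. This is precisely the hypothesis of Theorem~\ref{theo:noise-end}, with the single terminal channel taken to be $\NC=\NC_{eff}^{Depol}$. Applying that theorem directly then yields $\rank[F(\rhot_{\thv})]\leq\rank[F(\rho_{\thv})]$, which is the claim. Note that the real content of the argument is carried entirely by Theorem~\ref{theo:noise-end} and by the commutation identity; no new estimate is needed.

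The one point requiring care—and the step I would treat as the main (if minor) obstacle—is verifying that the reduction is a legitimate input to Theorem~\ref{theo:noise-end}: one must check that $\NC_{eff}^{Depol}$ is itself a bona fide global depolarizing channel (equivalently, a CPTP map), i.e. that its effective mixing parameter $1-(1-p)^{M+1}$ lies in $[0,1]$ so that the theorem applies verbatim. Since $0<p\leq 1$, we have $(1-p)^{M+1}\in[0,1)$ and the effective channel is well defined. I would also remark that the argument is insensitive to whether the depolarizing probabilities are uniform or layer-dependent, since the same commutation collapses the product $\prod_{m}(1-p_m)$ into a single factor; hence the conclusion extends immediately to the layer-dependent case as well.
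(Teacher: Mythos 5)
Your proposal is correct and follows essentially the same route as the paper's own proof in Appendix~\ref{ap:B}: commute the depolarizing layers through the unitaries via Eqs.~\eqref{eq:eff-depol}--\eqref{eq:depol-eff-def} to obtain a single terminal channel $\NC_{eff}^{Depol}$, then apply Theorem~\ref{theo:noise-end}. Your extra checks (that the effective mixing parameter lies in $[0,1]$ and that the argument survives layer-dependent probabilities $p_m$) are sensible refinements already implicit in the paper's treatment, not a different method.
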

The proof of this theorem can be found in Appendix~\ref{ap:B}. Theorem~\ref{lem:rank-depol} shows that the presence of global depolarizing channels cannot increase the number of available directions in state space, as the rank of the QFIM is non-increasing. Again, this means that one can overparametrize the noisy QNN with the same number of parameters needed to overparametrize the noiseless QNN, when global depolarizing noise acts throughout the circuit.

In addition, we can also analyze how the eigenvalues of the QFIM change due to the presence of the global depolarizing channels. Here, the following theorem holds. 
\begin{theorem}\label{lem:exp-supression}
When global depolarizing channels act before and after every unitary in the QNN, the entries of the QFIM (and therefore its eigenvalues) satisfy
\begin{equation}
    [F(\rho_{\thv})]_{ij}\in\OC(e^{-p (M+1)})\,
\end{equation}
i.e., they become exponentially suppressed with the product of the number of gates $M$ and the probability of depolarization $p$.  
\end{theorem}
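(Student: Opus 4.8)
The plan is to exploit the explicit form of the noisy output state derived in Eq.~\eqref{eq:depol-noise-state}, which expresses $\widetilde{\rho}_{\thv}$ as a convex combination of the noiseless state $\rho_{\thv}$ and the maximally mixed state $\id/d$, with mixing coefficient $(1-p)^{M+1}$. Writing $q \equiv (1-p)^{M+1}$, we have $\widetilde{\rho}_{\thv} = q\,\rho_{\thv} + (1-q)\,\id/d$. The key structural observation is that the maximally mixed state is parameter-independent, so all $\thv$-dependence (and hence all derivatives $\partial_i \widetilde{\rho}_{\thv} = q\,\partial_i \rho_{\thv}$) is carried entirely by the first term and is uniformly scaled by $q$.

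First I would diagonalize $\widetilde{\rho}_{\thv}$. Since $\rho_{\thv}$ and $\id/d$ are simultaneously diagonalizable (the identity is diagonal in every basis), the eigenvectors $\ket{r_\mu}$ of $\widetilde{\rho}_{\thv}$ coincide with those of $\rho_{\thv}$, and its eigenvalues are $\widetilde{r}_\mu = q\,r_\mu + (1-q)/d$. Crucially, every eigenvalue satisfies $\widetilde{r}_\mu \geq (1-q)/d > 0$, so the state is full rank and all denominators $\widetilde{r}_\mu + \widetilde{r}_\nu \geq 2(1-q)/d$ are bounded away from zero. Next I would substitute into the first line of Eq.~\eqref{eq:QFIM-mixed-elem-si}: the numerator $2\Re[\bra{r_\mu}\partial_i\widetilde{\rho}_{\thv}\ket{r_\nu}\bra{r_\nu}\partial_j\widetilde{\rho}_{\thv}\ket{r_\mu}]$ picks up a factor $q^2$ from the two derivatives, while the denominator is bounded below, giving a bound of the schematic form
\begin{equation}
    \big|[F(\widetilde{\rho}_{\thv})]_{ij}\big| \leq \frac{q^2}{2(1-q)/d}\sum_{\mu,\nu}\big|\bra{r_\mu}\partial_i\rho_{\thv}\ket{r_\nu}\big|\,\big|\bra{r_\nu}\partial_j\rho_{\thv}\ket{r_\mu}\big|\,.
\end{equation}
The remaining sum is a fixed, $p$-independent quantity (bounded by the norms of the parameter-shift derivatives, which are controlled since the generators $H_m$ are bounded), so the entire $p$-dependence of the bound resides in the prefactor $q^2/(1-q)$.

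The final step is the asymptotic analysis of this prefactor. For fixed $p \in (0,1]$ and growing $M$, or for growing $p(M+1)$, we have $q = (1-p)^{M+1} \leq e^{-p(M+1)}$, so $q^2 \leq e^{-2p(M+1)}$ decays exponentially while $1-q \to 1$ stays bounded below. Hence $[F(\widetilde{\rho}_{\thv})]_{ij} \in \OC(e^{-p(M+1)})$ as claimed (in fact one obtains the sharper $e^{-2p(M+1)}$ scaling, consistent with the stated bound). Since the operator norm of a symmetric matrix is controlled by its entries (e.g. $\|F\| \leq M \max_{ij}|F_{ij}|$), the same exponential suppression transfers to all eigenvalues. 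I expect the main obstacle to be handling the $(1-q)^{-1}$ factor carefully in the regime where $p$ is small and $M$ is not large, since there $q \to 1$ and the denominator $1-q$ vanishes; however, in that regime the numerator $q^2 \to 1$ as well, so the QFIM entries simply reduce to the noiseless ones and no blow-up occurs — the bound $\OC(e^{-p(M+1)})$ is only claimed to be meaningful in the suppression regime $p(M+1) \gtrsim 1$. Making this interplay precise, rather than the algebra itself, is the delicate point.
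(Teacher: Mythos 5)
Your proof is correct, and its core --- commuting the depolarizing channels to the end to write $\widetilde{\rho}_{\thv}=q\,\rho_{\thv}+(1-q)\,\id/d$ with $q=(1-p)^{M+1}$, observing that the eigenvectors are unchanged while $\widetilde{r}_\mu=q\,r_\mu+(1-q)/d$, and substituting into Eq.~\eqref{eq:QFIM-mixed-elem-si} --- is exactly the computation the paper performs for the QFIM entries. You diverge in two places. First, in the entry bound: the paper keeps the full denominator $q(r_\mu+r_\nu)+2(1-q)/d$ and reads off an $\OC(q)$ scaling that is uniform in $q$, whereas you lower-bound the denominator by $2(1-q)/d$, which buys you the sharper $q^2$ numerator scaling in the deep-noise regime but forces the $(1-q)^{-1}$ caveat you correctly flag for $p(M+1)\ll 1$; the paper's choice sidesteps that discussion entirely (and since the claim is asymptotic in $p(M+1)$, your resolution of the small-noise regime is adequate). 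Second, for the eigenvalues, the paper does not pass through entry bounds at all: it invokes convexity of the QFIM together with $F(\id/d)=0_{M\times M}$ to obtain the operator inequality $F(\widetilde{\rho}_{\thv})\leq q\,F(\rho_{\thv})$, and a positive-semidefiniteness quadratic-form argument then gives $\widetilde{\lambda}^m(\thv)\leq q\,\lambda_{\max}(\thv)$ directly. Your transfer $\|F\|\leq M\max_{ij}|F_{ij}|$ is valid and costs only a polynomial-in-$M$ (and, via the $\mu,\nu$ sums, dimension-dependent) prefactor, which is immaterial against the exponential, but the convexity route is cleaner and uniform in $q$. Both arguments conclude identically via $1-p\leq e^{-p}$; there is no gap in your proposal.
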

This theorem is proven in Appendix~\ref{ap:C}. Theorem~\ref{lem:exp-supression} indicates that while global depolarizing noise cannot increase the number of available directions in state space, it does suppress the sensitivity of the state to \textit{any} variations in the parameters. This result can be used to further understand the so-called \textit{noise-induced barren plateau} phenomenon~\cite{wang2020noise,franca2020limitations} whereby noise erases all the features in the QML model's training landscape. For instance, let us note that given a linear loss function (i.e., $f_s(x)=x$ in Eq.~\eqref{eq:loss-function}), one has 
\begin{align}
\LC(\thv)&= \Tr[\widetilde{\CC}_{\thv}(\rho)O]\nonumber\\
    &=(1-p)^{M+1} \Tr[\CC_{\thv}(\rho)O]+(1-(1-p)^{M+1}) \Tr[O]\,.\label{eq:lineal-loss-SI}
\end{align}
Equation~\eqref{eq:lineal-loss-SI} shows that the optimization landscape becomes exponentially flat with $M$ and $p$ (hence a noise-induced barren plateau). As such, our results show that noise-induced insensitivities arise already at the level of state space, thus providing a more fundamental understanding of the noise-induced barren plateau phenomenon.

\subsection{Local depolarizing plus unital Pauli channels}

In this section we show that some of the intuition gathered from the previous sections can be extended to more general Pauli noise models. Namely, we here consider how noise affects the QFIM for a QNN acting on $n$-qubits when a fairly general Pauli noise acts. As we will show, the entries of the QFIM, and concomitantly its eigenvalues, get  exponentially suppressed with the product of the number of noise channels and the noise probability. We note that here we will not attempt to prove that, on average, noise increases the rank of the QFIM. This is due to the fact that there  can exist  special types of noise and parameter values for which the rank is not increased (see Sections~\ref{sec:single-qubit} and~\ref{sec:global}). Because of these  subtleties, we will leave a more detailed rank analysis for future work.

In what follows we will consider a general noise model where noise channels are interleaved with the unitary channels of the QNN as
\begin{equation}\label{eq:noisy-channel-1}
\widetilde{\CC}_{\thv}=\NC_{M+1}\circ\CC^M_{\theta_M}\circ\NC_{M}\circ\cdots\circ\NC_2\circ\CC^1_{\theta_1}\circ\NC_1\,,
\end{equation}
for some (potentially layer dependent) noise channels $\NC_m$, $m=1,\ldots,M+1$. Again, we note that $\NC_1$ acts before $\CC^1_{\theta_1}$, which avoids the state changing from pure to mixed after the first parametrized gate. Moreover,  as shown in Fig.~\ref{fig:general_models}, we will assume that each noise channel is composed of local depolarizing noise channels acting on each qubit plus some general unital Pauli noise. That is, 
\begin{equation} \label{eq:noise_channel-2}
    \NC_m = \NC_{loc}^{Dep}(\rho)\circ \NC_m^P(\rho)\,,
\end{equation}
where $\NC_m^P(\rho)$ is an arbitrary unital Pauli quantum channel and $\NC_{loc}^{Dep}$  is a product of local depolarizing channels as given by Eq.~\eqref{eq:local-depol-noise}. For simplicity, we will assume that all  local depolarizing noise channels have the same probability $p$. In Appendix~\ref{ap:extension} we show how our results can be generalized to the case where they have different (qubit- and layer-dependent) probabilities. Moreover, we note that the order in which $\NC_{loc}^{Dep}(\rho)$ and $\NC_m^P(\rho)$ act in Eq.~\eqref{eq:noise_channel-2} will be irrelevant for our purposes, as our results can also be shown to hold when the order is reversed (see Appendix~\ref{ap:extension}).

\begin{figure}[t!]
    \centering
    \includegraphics[width=1\linewidth]{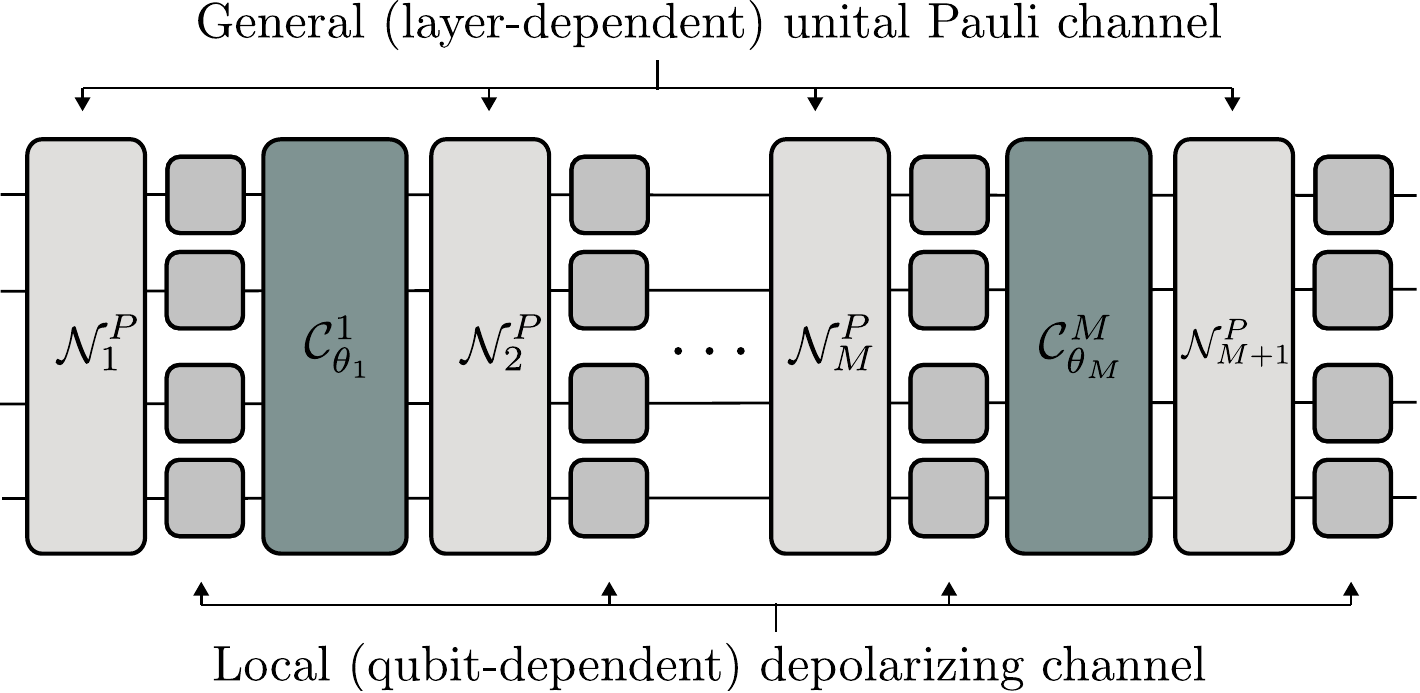}
    \caption{{\bf Schematic representation of a QNN under the general noise model considered.} Our results are derived for a general noise model where unitary gates are interleaved by noise channels composed of local depolarizing noise channels acting on each qubit plus some general unital Pauli noise.  }
    \label{fig:general_models}   
\end{figure}

For this noise model, we prove the following theorem.
\begin{theorem} \label{lem:exp-supp-pauli}
Let $\widetilde{\CC}_{\thv}$ be a noisy channel as in Eqs.~\eqref{eq:noisy-channel-1} and~\eqref{eq:noise_channel-2}, where a Pauli noise channel (composed of a local depolarizing noise acting on each qubit plus a general unital Pauli channel)  acts before and after each gate of the QNN. Furthermore, let $p$ be the  probability of the local depolarizing channels as in Eq.~\eqref{eq:local_depolarizing}. The entries of the QFIM, and thus its eigenvalues, are exponentially suppressed with the product of $M$ and $p$  as $\OC(e^{-2p (M+1)})$.
\end{theorem}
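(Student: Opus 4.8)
The plan is to pass to the Pauli (Liouville) representation, where both the local depolarizing channels and the unital Pauli channels act diagonally. First I would record the key contraction property: writing any operator in the Pauli basis $\{X^{\vec{\alpha}}Z^{\vec{\beta}}\}$, a unital Pauli channel $\NC_m^P$ multiplies the coefficient of $X^{\vec{\alpha}}Z^{\vec{\beta}}$ by some $c^{(m)}_{\vec{\alpha}\vec{\beta}}$ with $|c^{(m)}_{\vec{\alpha}\vec{\beta}}|\le 1$ (Definition~\ref{def:pauli_channel}), while the product of local depolarizing channels $\NC_{loc}^{Dep}$ multiplies it by $(1-p)^{w}$, where $w\ge 1$ is the Pauli weight. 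Hence each composite channel $\NC_m=\NC_{loc}^{Dep}\circ\NC_m^P$ is diagonal in the (orthogonal) Pauli basis, leaves the identity component $\id/d$ invariant (each $\NC_m$ is unital), and contracts every traceless component by a factor of magnitude at most $(1-p)$; equivalently, its restriction to the traceless subspace has Hilbert--Schmidt operator norm $\le(1-p)$. Since the unitary channels preserve the Hilbert--Schmidt norm and the traceless subspace, the overall contraction after $k$ interleaved noise channels is at most $(1-p)^{k}$.

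Second, I would bound the Hilbert--Schmidt norm of $\partial_i\widetilde{\rho}_{\thv}$. Since $\theta_i$ enters only through $\CC^i_{\theta_i}$, differentiating that channel gives the map $\sigma\mapsto -i[H_i,\CC^i_{\theta_i}(\sigma)]$, which is traceless and annihilates the identity part of its argument. Let $\sigma_i$ be the state just before the $i$-th gate; it has passed through the $i$ noise channels $\NC_1,\dots,\NC_i$, so its traceless part $\tau_i$ obeys $\|\tau_i\|_2\le(1-p)^{i}\,\|\rho-\id/d\|_2$. Because the commutator kills $\id/d$, only $\tau_i$ survives, giving a traceless operator of norm at most $2\|H_i\|_\infty(1-p)^{i}$, which then passes through the remaining $M+1-i$ noise channels $\NC_{i+1},\dots,\NC_{M+1}$ and is contracted by $(1-p)^{M+1-i}$. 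The two factors combine to the full power, yielding $\|\partial_i\widetilde{\rho}_{\thv}\|_2\le 2\|H_i\|_\infty(1-p)^{M+1}\in\OC(e^{-p(M+1)})$, where I use $\|H_i\|_\infty\le\max_{H\in\GC}\|H\|_\infty$ and $(1-p)\le e^{-p}$. The essential point is that the identity component---which is \emph{not} suppressed by the noise---is removed by the commutator at the differentiated gate, so the surviving signal is forced through all $M+1$ noise layers rather than only those after gate $i$.

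Third, I would insert this into the mixed-state QFIM of Eq.~\eqref{eq:QFIM-mixed-elem-si}. Using $r_\mu+r_\nu\ge 2r_{\min}$ (so that $\tfrac{2}{r_\mu+r_\nu}\le \tfrac{1}{r_{\min}}$) together with the identity $\sum_{\mu\nu}|\langle r_\mu|\partial_i\widetilde{\rho}_{\thv}|r_\nu\rangle|^2=\|\partial_i\widetilde{\rho}_{\thv}\|_2^2$, one gets $[F(\widetilde{\rho}_{\thv})]_{ii}\le \tfrac{1}{r_{\min}}\|\partial_i\widetilde{\rho}_{\thv}\|_2^2$, where $r_{\min}=\min_\mu r_\mu$. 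The off-diagonal entries follow from positive semidefiniteness, $|[F]_{ij}|\le\sqrt{[F]_{ii}[F]_{jj}}$, and the eigenvalue bound from $\lambda_{\max}(F)\le\Tr[F]=\sum_i[F]_{ii}$. Since the QFIM is quadratic in $\partial\widetilde{\rho}_{\thv}$, the $(1-p)^{M+1}$ estimate gets squared, producing the advertised $\OC(e^{-2p(M+1)})$ scaling (the extra polynomial-in-$M$ factor in the eigenvalue bound being dominated by the exponential).

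The main obstacle I anticipate is controlling the denominator $r_\mu+r_\nu$, i.e.\ showing that $r_{\min}$ stays bounded away from zero so that the factor $1/r_{\min}$ does not spoil the estimate. I would handle this using the local depolarizing layer: expanding $\NC_{loc}^{Dep}=\bigotimes_j\bigl[(1-p)\mathcal{I}_j+p\,\mathcal{D}_j\bigr]$, where $\mathcal{D}_j(\sigma)=\tfrac{1}{2}\id_j\otimes\Tr_j[\sigma]$, and isolating the all-qubit term $p^{n}\,\id/d$ (all other terms being positive semidefinite) shows $\NC_{loc}^{Dep}(\sigma)\succeq (p^{n}/d)\,\id$, hence $r_{\min}\ge p^{n}/d$. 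This lower bound survives a subsequent unital Pauli channel, since the latter is a convex mixture of eigenvalue-preserving Pauli conjugations, so the ordering of the two factors in Eq.~\eqref{eq:noise_channel-2} is immaterial. Absorbing the resulting prefactor $4\|H_i\|_\infty^2\,d/p^{n}$, which is constant in $M$, into the $\OC$ completes the argument; the qubit- and layer-dependent generalization only rescales these constants and I would defer it to the appendix.
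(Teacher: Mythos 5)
Your proof is correct, but it takes a genuinely different route from the paper's. The paper's Appendix~\ref{ap:D} never enters the Pauli transfer picture and never touches the spectrum of $\widetilde{\rho}_{\thv}$: it bounds the Bures distance $\BC(\widetilde{\rho}_{\thv},\widetilde{\rho}_{\thv+\vec{\delta}})$, whose second-order expansion is the quadratic form $\vec{\delta}^T\cdot F(\widetilde{\rho}_{\thv})\cdot\vec{\delta}$, by trace distances to $\id/d$ (triangle inequality), then by relative entropies via Pinsker's inequality, and finally invokes the strong data-processing result of M\"uller-Hermes, Fran\c{c}a and Wolf (Lemma~\ref{lem-renyi}, packaged as Lemma~\ref{lem:iterative-lemma}): each noise layer contracts $S(\,\cdot\,|\id/d)$ by $(1-p)^2$, the unital Pauli factor being absorbed by monotonicity of relative entropy. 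Iterating $M+1$ times gives $\vec{\delta}^T\cdot F(\widetilde{\rho}_{\thv})\cdot\vec{\delta}\leq 8\ln(2)(1-p)^{2(M+1)}S(\rho|\id/d)$ uniformly in $\vec{\delta}$. You instead contract the Hilbert--Schmidt norm of $\partial_i\widetilde{\rho}_{\thv}$ by $(1-p)$ per noise layer --- your observation that the commutator at the differentiated gate annihilates the unsuppressed identity component, forcing the surviving signal through all $M+1$ layers, is exactly right and plays the role of the paper's entropic iteration --- and then convert to QFIM entries via the spectral floor $r_{\min}\geq p^n/d$, which is also sound (and, as you note, robust to reordering $\NC_{loc}^{Dep}$ and $\NC_m^P$, mirroring the remark in Appendix~\ref{ap:extension}). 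What each approach buys: yours is elementary and self-contained, avoiding the external entropy-contraction theorem and yielding explicit per-entry constants in terms of $\|H_i\|_\infty$; the paper's route buys a prefactor $S(\rho|\id/d)\leq\ln d\in\OC(n)$ that is uniform in $p$, so the suppression genuinely scales with the product $p(M+1)$. By contrast, your prefactor $4\|H_i\|_\infty^2\, d/p^n$ is exponential in $n$ and diverges as $p\to 0$, so your bound only becomes nontrivial once $M\gtrsim n\ln(2/p)/p$; it does establish the theorem as stated (big-$\OC$ in $M$ at fixed $p$ and $n$), but it is strictly weaker than the paper's bound as a joint statement in $(n,p,M)$.
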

See Appendix~\ref{ap:D} for a proof of Theorem~\ref{lem:exp-supp-pauli}.
This theorem states that under very general noise models, the entries of the QFIM and its eigenvalues vanish exponentially with the noise probability and the number of gates. Crucially, we will have that irrespective of whether the rank of the QFIM is increased or not by the noise, if the circuit is too deep (large $M$), or if the noise levels are too high (large $p$), the state becomes insensitive to parameter changes. Similarly to Theorem~\ref{lem:exp-supression}, this result sheds new light into the noise-induced barren plateau phenomenon~\cite{wang2020noise,franca2020limitations}.

\section{Numerical results}

\begin{figure}[t!]
    \centering
    \includegraphics[width=0.5\textwidth]{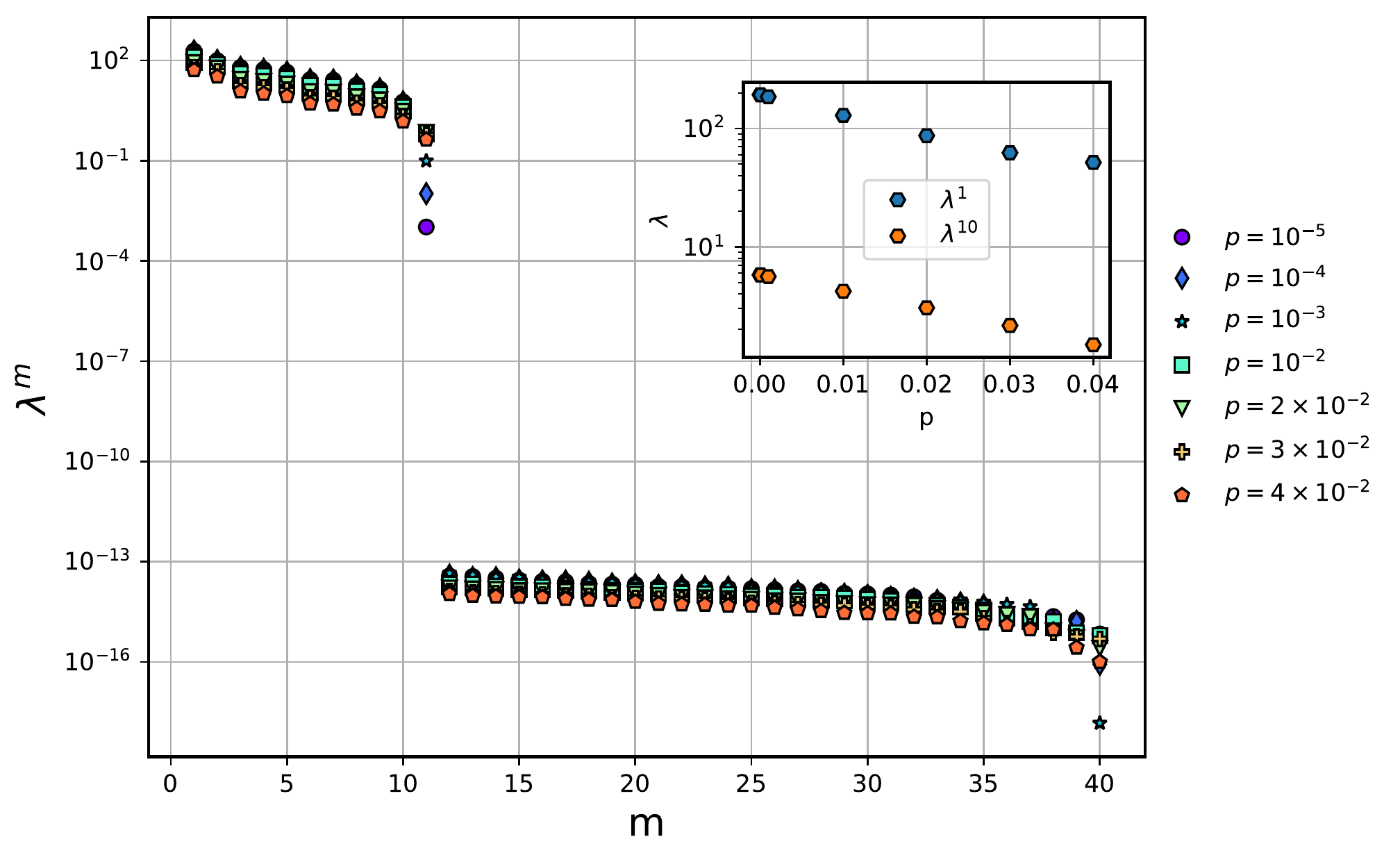}
    \caption{\textbf{Eigenvalues of the QFIM under global depolarizing noise.} Here we consider a problem where an $n=10$ qubit state is sent through an HVA quantum circuit as in Eq.~\eqref{eq:HVA}, with $L=20$ (i.e., with $M=40$ parameters). In the simulations,  global depolarizing noise acts on all qubits before and after each gate. We show the magnitude of the $m$-th eigenvalue of the QFIM for different noise values $p$, at a random point in the landscape. The inset shows the scaling of two non-null eigenvalues with $p$.}
    \label{fig:global1}
\end{figure}

In this section we present numerical results that extend and complement our theoretical findings.  All the simulations presented here have been performed in double precision with the open-source library \texttt{qibo}~\cite{efthymiou2020qibo}, using the fast \texttt{qibojit} backend~\cite{efthymiou2022quantum}. The simulations have been carried out on CPUs, namely IntelCore i7-9750H and AMD Ryzen Threadripper PRO 3955WX cores. 

In particular, we consider the problem where the QNN is given by a  Hamiltonian Variational Ansatz (HVA)~\cite{wecker2015progress,wiersema2020exploring}  with generators inspired by the transverse-field Ising model with periodic boundary conditions. That is, we have $\GC=\{H_0,H_1\}$, with
\begin{equation}
    H_0= \sum_{i=1}^n \sigma_i^z\sigma_{i+1}^z\ ,\quad H_1=\sum_{i=1}^n \sigma_i^x \,,
\end{equation}
and $\sigma_{n+1}^z\equiv \sigma_1^z$. Here, the action of the noiseless QNN  $U(\thv)$ is given by
\begin{equation}\label{eq:HVA}
    U(\thv) = \prod_{l=1}^L e^{-i\theta_{l1} H_1}e^{-i\theta_{l0} H_0}\,,
\end{equation}
where $L$ is the number of layers. Thus, the QNN has $M=2L$ parameters. We have fixed the initial state of the QNN to be the state $\ket{+}^{\otimes n}$. As shown in~\cite{larocca2021theory}, the DLA associated with this ansatz has dimension $\dim(\liea)=\frac{3}{2}n$, meaning that the QNN can be overparametrized with only a polynomial (linear) number of parameters (or layers). In what follows, we will study how the presence of noise affects the QFIM. In all cases, the computations have been carried out at random points in parameter space.

First, in order to validate Theorem~\ref{lem:rank-depol} we have simulated the action of global depolarizing channels acting before and after each gate. The results are depicted in Fig.~\ref{fig:global1}, where the eigenspectrum of the QFIM is plotted for $n=10$ qubits, $M=40$ parameters, and different noise probabilities. Here we see that the rank of the QFIM is unaffected by global depolarizing noise as indicated by Theorem~\ref{lem:rank-depol}. These numerical results also allow us to verify the exponential decrease of the QFIM eigenvalues with the probability of the depolarizing noise, predicted by Theorem~\ref{lem:exp-supression} (see inset in Fig.~\ref{fig:global1}).

\begin{figure}[t!]
    \centering
    \includegraphics[width=0.5\textwidth]{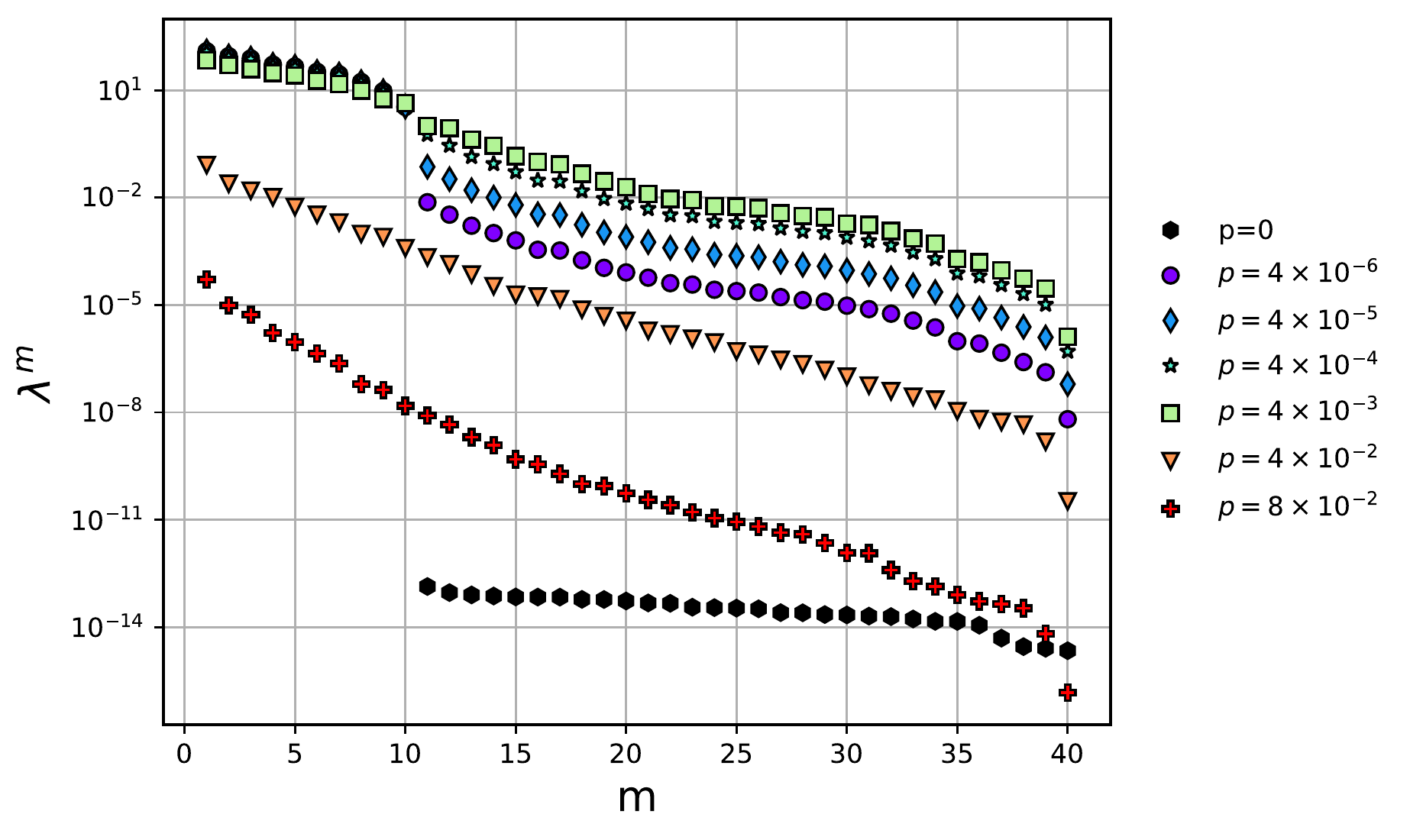}
    \caption{\textbf{Eigenvalues of the QFIM under local depolarizing noise.} Here we consider a problem where an $n=10$ qubit state is sent through an HVA quantum circuit as in Eq.~\eqref{eq:HVA}, with $L=20$ (i.e., with $M=40$ parameters). In the simulations,  local depolarizing noise channels act with the same probability $p$ on all qubits before and after each gate. We show the magnitude of the $m$-th eigenvalue of the QFIM for different noise values $p$, at a random point in the landscape.}
    \label{fig:local1}
\end{figure}

\begin{figure}[t!]
    \centering

    \includegraphics[width=0.5\textwidth]{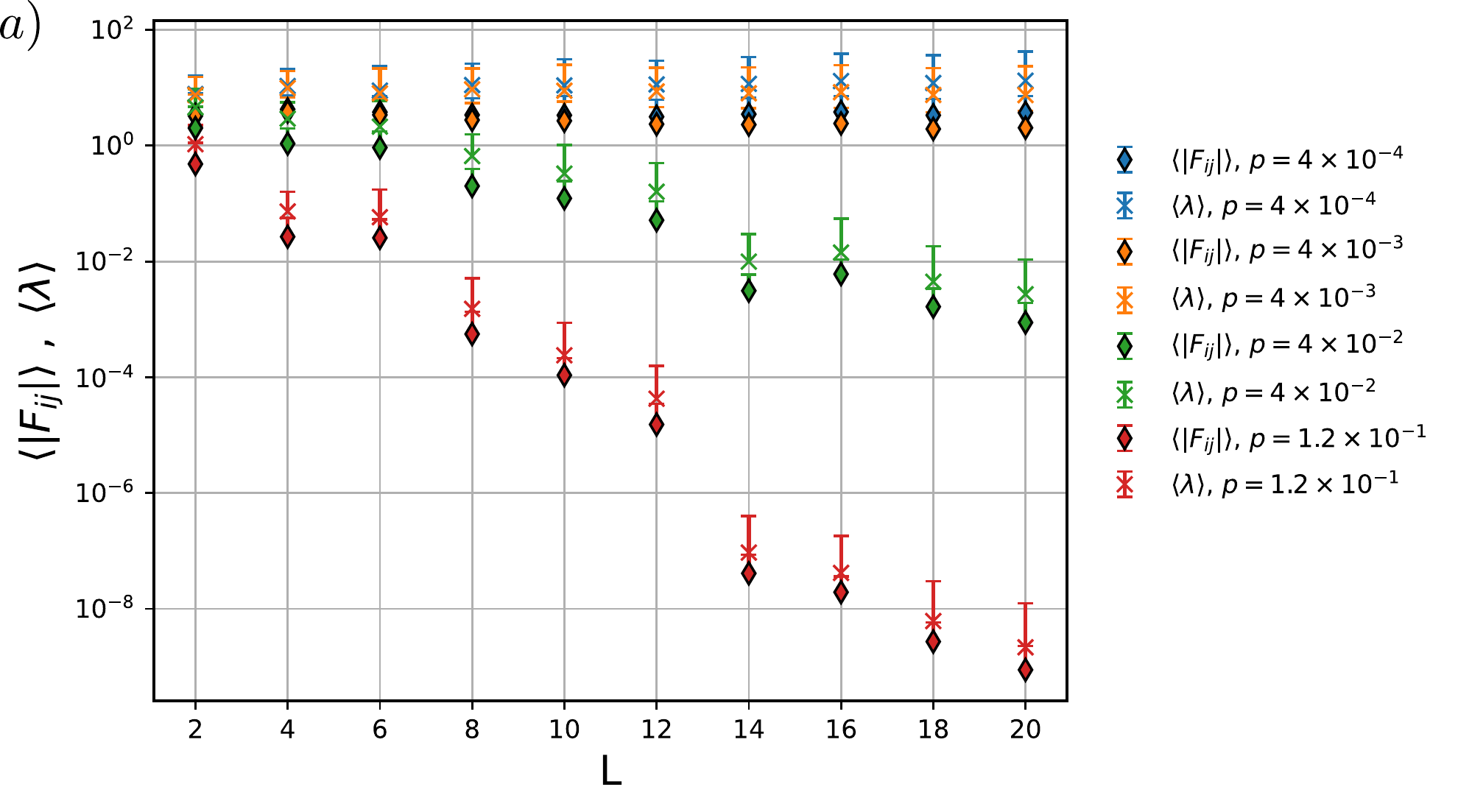}
    
    \includegraphics[width=0.47\textwidth]{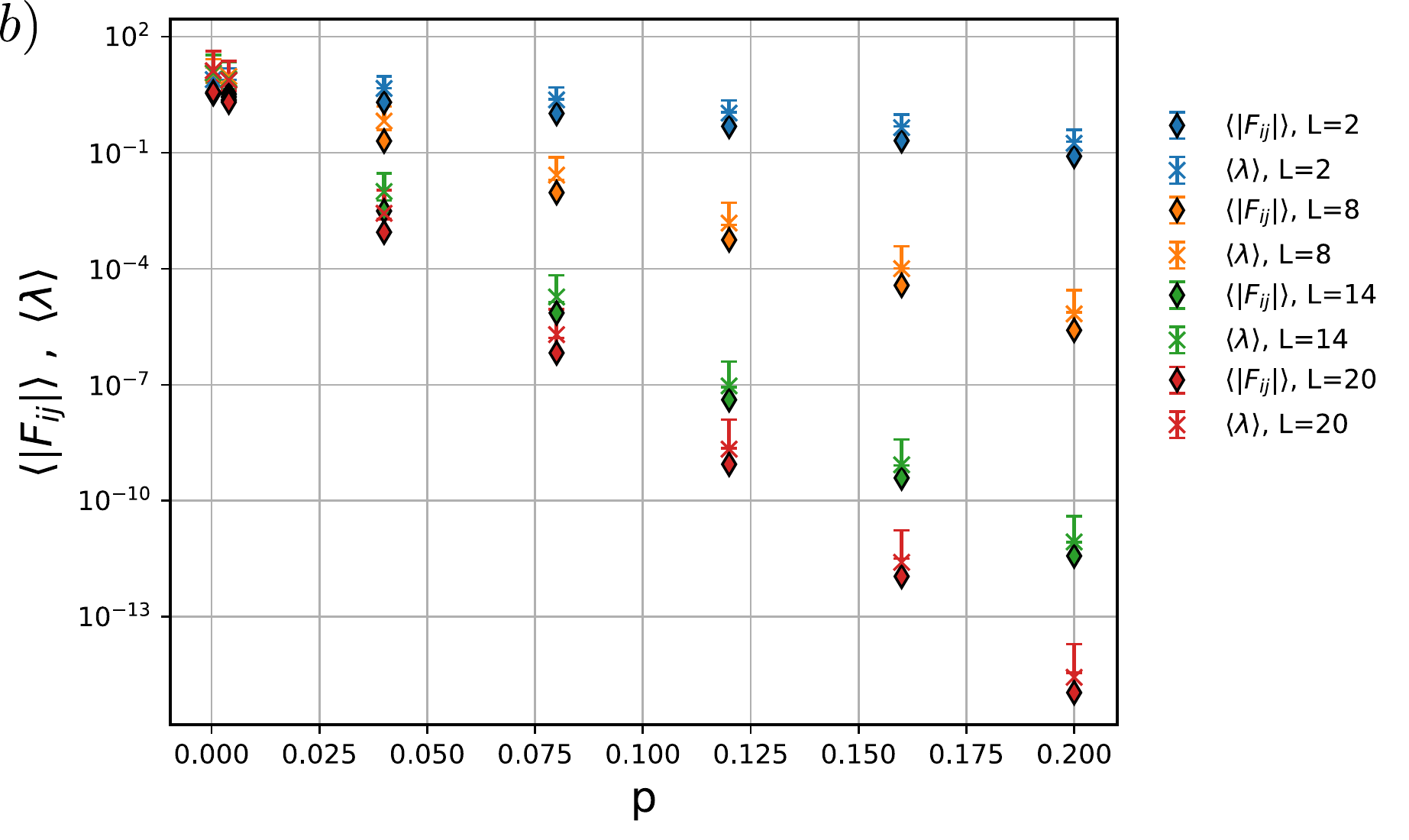}
    \caption{\textbf{Average magnitude of the QFIM entries and eigenvalues in the presence of local depolarizing noise.} Here we consider a problem where an $n=10$ qubit state is sent through an HVA quantum circuit as in Eq.~\eqref{eq:HVA}, with fixed parameter values for each $(L,p)$. In the simulations,  local depolarizing noise channels act with the same probability $p$ on all qubits before and after each gate. We show the average magnitude of the entries of the QFIM and its eigenvalues for different a) number of layers $L$ and b) noise values $p$. Bars depict the standard deviation across the different entries (or eigenvalues) of the QFIM.}
    \label{fig:local2}
\end{figure}

Second, we have simulated the case where local depolarizing channels act on each qubit before and after every gate in the circuit. The results are shown in Fig.~\ref{fig:local1}, where we plot the eigenspectrum of the QFIM for $n=10$ qubits, $M=40$ parameters, and different noise probabilities. In contrast to global depolarizing channels, local noise does increase the rank of the QFIM. This can be observed in the plot from the fact that as soon as $p$ is larger than zero, all the eigenvalues of the QFIM become non-null (as opposed to the noiseless case). As already discussed, this implies that noise enables new directions in state space. Moreover, here we can see that according to Definition~\ref{def:overparametrization}, noise can turn an overparametrized QNN (with saturated rank) into an underparametrized one (where the rank of the QFIM is equal to the number of parameters). Notably, Fig.~\ref{fig:local1} shows that there exists a  certain robustness to noise in the overparametrization phenomenon. This is evidenced from the fact that  when the probability of noise acting is small (e.g., $p\sim10^{-5}$), there still exist a gap of about two orders of magnitude between the dominant eigenvalues and the ``newly-appeared'' ones. Hence, for small noise levels the system can be considered to be in a quasi-overparametrized regime, where large eigenvalues of the QFIM (the ones that were previously non-zero) coexist with small eigenvalues  (the ones that were previously zero).

This  separation in eigenvalue magnitude disappears  when noise levels increase. As shown in Fig.~\ref{fig:local1},  for large enough noise probability (e.g., $p=0.08$) all the eigenvalues  are  exponentially vanishing. Moreover, in  Fig.~\ref{fig:local2}  we show the scaling of the QFIM entries and eigenvalues with the number of gates and noise probability. As in the case of global depolarizing noise, these decrease exponentially (with some statistical fluctuations observed). Taken together, the results in  Figs.~\ref{fig:local1} and~\ref{fig:local2} numerically confirm the result established in Theorem~\ref{lem:exp-supp-pauli}.

\section{Implications to capacity measures}

Let us briefly discuss the implications of the previous results for the capacity measures of Refs.~\cite{haug2021capacity,abbas2020power}. In particular, we have seen that both these measures are related to the maximum rank of the quantum or classical Fisher information matrix. For simplicity, we will consider first a noiseless QNN that has enough parameters $M$ to be well beyond the overparametrization threshold. In this case, the rank of the QFIM, and concomitantly its capacities $D_1(\thv)=\rank[F(\widetilde{\rho}_{\thv})]$ and $D_2$, are saturated, and are such that $D_1(\thv), D_2<M$ (see~\cite{larocca2021theory}).

The results presented in this work indicate that if hardware noise is present, then the rank of the QFIM can increase (e.g., the QFIM can become full rank). Using $D_1(\thv)$ as capacity measure~\cite{haug2021capacity} would imply that the QNN's capacity is increased by noise. Moreover, a similar conclusion can be drawn for the capacity measure of~\cite{abbas2020power} as follows. Indeed, when the noise renders the QFIM full rank, then it becomes invertible and the following inequality holds~\cite{albarelli2019evaluating}
\begin{equation}
    I^{-1}(\widetilde{\rho}_{\thv})\geq F^{-1}(\widetilde{\rho}_{\thv})\,.
\end{equation}
This implies that  the classical Fisher information is also invertible, and thus full rank. Since $\rank[F(\rho_{\thv})]\geq \rank[I(\rho_{\thv})]$~\cite{chan1985hermitian}, this means that when the rank of the QFIM increases and becomes full rank, so does the rank of the classical Fisher information matrix. Hence, using Eq.~\eqref{eq:eff_dim_2} we see that noise can also increase the capacity of the QNN when $D_2$ is used as capacity measure.

We remark that it is  true that new directions are enabled in state space by the action of noise, and that these can be somewhat partially controlled (see Fig.~\ref{fig:directions}). However, it  is also worth recalling that the sensitivity of the noisy state to parameter updates along these direction decreases exponentially with the noise magnitude. In fact, one can expect  that in the regime where the noise is sufficiently large,  the QFIM can be full rank (i.e., $\rank[F(\widetilde{\rho}_{\thv})]=M$) but the magnitude of its eigenvalues   exponentially small (see Theorem~\ref{lem:exp-supp-pauli}). In this scenario, the QNN has a seemingly increased capacity due to noise, but the state is rendered insensitive to parameter changes.

The critical issue here is that the  rank of the QFIM is a discrete number that depends on the number of strictly non-zero eigenvalues of the QFIM, but not on their magnitude. Such issue could be potentially alleviated by considering capacity measures that depend on the magnitude of the eigenvalues.  
For instance, one could  modify the measure of 
~\cite{haug2021capacity} (see  Eq.~\eqref{eq:eff_dim_1}) as 
\begin{equation}D_1^{(\epsilon)}(\thv)=\mathbb{E}\left[\sum_{m=1}^M\ZC^{(\epsilon)}(\lambda^{m}(\thv))\right]\,,
\end{equation}
where $\lambda^{m}(\thv)$ are the eigenvalues of the QFIM for the state $\rho_{\thv}$, $\ZC^{(\epsilon)}(x)=0$ for $x\leq \epsilon$, and $\ZC(x)=1$ for $x>\epsilon$. As such, one would only account for the eigenvalues of the QFIM that are larger than a given tuneable constant $\epsilon$. We leave however the study of such a measure for future work, as more research is needed to understand the interplay between the capacity of QNNs and quantum noise.

\section{Conclusions}

Theoretically understanding the performance of QNNs is a fundamental step to guaranteeing their success in practical realistic scenarios. While there has been tremendous efforts  in studying noiseless QNNs, little is known about their performance when hardware noise acts throughout the computation. However, since noise is a defining property of near-term quantum devices, more research is needed to bridge the gap between our understanding of noiseless and  noisy QNNs.

In this work we focus on the overparametrization of noisy QNNs. 
To analyze this phenomenon, we first present a toy model example which showcases how noise acting throughout a quantum circuit can indeed increase the rank of the QFIM. Crucially, this means that noise can transform an overparametrized QNN into an underparametrized one. Moreover, this toy model also illustrates a second general effect that noise exerts on QNNs, namely it produces a decrease in the magnitude of the QFIM eigenvalues and thus in the sensitivity of the quantum state to parameter updates.

We then derive analytical results proving that a noise channel at the end of a quantum circuit cannot increase the rank of the QFIM. This implies that certain noise models, like global depolarizing noise interleaved with unitary gates, or measurement noise, leave the rank of the QFIM unaffected. In turn, this means that the noisy QNN can be overparametrized with the same number of parameters as the noiseless QNN. However, we also prove that global depolarizing channels suppress the QFIM entries and its eigenvalues exponentially with the product of the number gates and the probability of depolarization. This renders the output of the QNN insensitive to changes in the variational parameters. 

Furthermore, we prove that for fairly general Pauli noise models (consisting of local depolarizing channels and unital Pauli noise), the  eigenvalues and entries of the QFIM get exponentially suppressed with the circuit depth and the noise probability.
 Our results point to a combined effect arising from noise, whereby the rank of the QFIM can be increased, but at the same time the magnitude of all the eigenvalues of the QFIM (both the pre-existing ones and the ones that the noise ``turned on'') get suppressed. Therefore, although noise enables new directions, it also makes the noisy state insensitive to changes in the parameter values.

With the help of numerical simulations we are able to identify three  regimes for the overparametrization phenomenon in the presence of noise. The first corresponds to small noise levels. Here, the magnitude of the new non-zero eigenvalues is very small compared to that of pre-existing ones, whose magnitude remains largely unchanged, indicating a certain robustness to noise. In this ``quasi-overparametrized'' regime, the state is mostly insensitive when the parameters are moved along the directions associated with the newly appeared non-zero eigenvalues. We leave for future work to study the performance and potential for quantum advantage of parametrized quantum circuits in this quasi-overparametrized regime. Indeed, there is evidence that specific types of noise can be useful to improve the trainability of noisy quantum circuits~\cite{sannia2023engineered}. Then, there exists an intermediate regime where the magnitude of the new non-zero eigenvalues is comparable to that of the previous non-zero ones, but smaller than in the noiseless scenario.  That is, in this regime there is no gap between large and small eigenvalues, but rather the eigenvalues (sorted from larger to smaller) lie on a continuous line. Finally, in the third regime, all the eigenvalues vanish and the state becomes (almost) completely insensitive to changes in the parameter values. Moreover,  we find that some of the new directions are purity altering, meaning that the QNN can map the state to regions where  it is more, or less, sensitive to the effects of noise.

We then study the implications of our results to current QNN capacity measures proposed in the literature~\cite{haug2021capacity,abbas2020power}. We find that measures based on the QFIM rank can be misleading when noise is taken into account. In the presence of noise, the QFIM can be transformed from singular to full rank, indicating (according to rank-based measures)
that noise can increase a QNN's capacity. However, the eigenvalues of the QFIM are exponentially suppressed, meaning that the state does not significantly change with parameter updates. This dissonance arises from the fact that the eigenvalue magnitude is not accounted for in rank-based measures, only the number of strictly non-zero eigenvalues. These capacity measures should then be modified accordingly, which is left for future work. 

To conclude, we discuss the impact of our results beyond the overparametrization phenomenon. 
For instance, our results can be understood as shedding new light on the noise-induced barren plateau phenomenon whereby the optimization landscape of QML models gets exponentially flat with the noise (or the depth) of the circuit~\cite{wang2020noise}. Namely, the flatness in the landscape arises from the state being insensitive to changes in the parameters, as evidenced by the exponentially suppressed eigenvalues of the  QFIM. In addition, our results have critical implications to noisy-state quantum sensing~\cite{koczor2020variational,beckey2020variational,kaubruegger2021quantum,huerta2022inference,falaye2017investigating}. Since the ultimate precision achievable for sensing external parameters depends on the quantum Fisher information through the quantum Cram{\'{e}}r-Rao bound~\cite{hayashi2016quantum,liu2016quantum}, our results demonstrate how the utility of a noisy state as a sensor gets degraded by the presence of noise. Finally, an open question that has recently received a lot of attention~\cite{anschuetz2022efficient,goh2023lie,cerezo2023does} is whether QNNs with poly sized DLAs are classically simulable. Although a very relevant question to the present study --poly DLA circuits are the only ones admitting efficient overparametrization-- we want to stress that their efficient classical simulation is not always guaranteed without the access to samples from a quantum computer~\cite{cerezo2023does}. Moreover,  even if shown fully classically simulable, overparametrized QNNs would still find application in problems where one does not seek a computational advantage, e.g., in quantum sensing~\cite{endo2020variational,beckey2020variational}.

\medskip

\acknowledgements 

We acknowledge the Referees of~\cite{larocca2021theory} for pointing us in this fruitful research direction. We would also like to thank  Max Hunter Gordon, Zo\"e Holmes, Eddie Schoute and Fr\'ed\'eric Sauvage for useful conversations. 
D.G-M. was supported by the U.S. Department of Energy, Office of Science, Office of Advanced Scientific Computing Research, under Computational Partnerships program. M.L. acknowledges support by the Center for Nonlinear Studies at Los Alamos National Laboratory (LANL) andy the U.S. Department of Energy (DOE), Office of Science, Office of Advanced Scientific Computing Research, under the Accelerated Research in Quantum Computing (ARQC) program. 
M.C. acknowledges the Quantum Science Center (QSC), a National Quantum Information Science Research Center of the U.S. DOE. This work was also supported by Laboratory Directed Research and Development (LDRD) program of LANL under project numbers 20230049DR and 20230527ECR.
\bibliography{quantum}

\appendix

\section{Proof of Theorem~\ref{theo:noise-end}}
\label{ap:A}

\begin{proof}
    
Recalling that the QFIM is non-increasing under quantum channels, we have that  $F(\widetilde{\rho}_{\thv})\leq F(\rho_{\thv})$, meaning that $F(\rho_{\thv})-F(\widetilde{\rho}_{\thv}) $ is positive semi-definite. Assuming that $F(\rho_{\thv})$ has rank $R=\rank[F(\rho_{\thv})]$, then there exists $M-R$ orthogonal vectors $\{\vec{\omega}\}_{k=1}^{R-M}$ such that 
\begin{equation}
    \vec{\omega}_k^T\cdot F(\rho_{\thv})\cdot\vec{\omega}_k=0\,, \quad \forall k\,.
\end{equation}
These $\{\vec{\omega}_k\}$ vectors form a basis of the null space of $F(\rho_{\thv})$.
From the previous we have that
\begin{equation}\label{eq:pos-sem-proof}
   \vec{\omega}_k^T\cdot ( F(\rho_{\thv})-F(\widetilde{\rho}_{\thv}))\cdot\vec{\omega}_k=-\vec{\omega}_k^T\cdot F(\widetilde{\rho}_{\thv})\cdot\vec{\omega}_k\,,
\end{equation}
But since both $F(\widetilde{\rho}_{\thv})$ and $(F(\rho_{\thv})-F(\widetilde{\rho}_{\thv})) $ are positive semidefinite we must have  $\vec{x}^T\cdot F(\widetilde{\rho}_{\thv})\cdot\vec{x}\geq 0 $  and $\vec{x}^T\cdot ( F(\rho_{\thv})-F(\widetilde{\rho}_{\thv}))\cdot\vec{x}\geq 0 $ for any $\vec{x}$. Combining this realization with Eq.~\eqref{eq:pos-sem-proof} implies that
\begin{equation}
    \vec{\omega}_k^T\cdot F(\widetilde{\rho}_{\thv})\cdot\vec{\omega}_k=0\,.
\end{equation}
In other words, the vectors $F(\widetilde{\rho}_{\thv})$ of the null space of $F(\rho_{\thv})$ are also in the null space of $F(\widetilde{\rho}_{\thv})$. Hence, the action of the noise channel at the end of the QNN cannot increase the rank of the QFIM. 
\end{proof}

\section{Proof of Theorem~\ref{lem:rank-depol}}
\label{ap:B}

\begin{proof}

We recall that the result in Eq.~\eqref{eq:eff-depol} and~\eqref{eq:depol-eff-def} show that adding global depolarizing noise channels at every layer  is equivalent to first acting with the noiseless unitary channel $\CC_{\thv}$ and then with an effective depolarizing noise channel with probability of depolarization of $1-(1-p)^{M+1}$. That is,
\begin{align}\label{eq:equivalency-channels}
&\NC^{Depol}\circ\CC^M_{\theta_M}\circ\NC^{Depol}\circ\cdots\circ\NC^{Depol}\circ\CC^1_{\theta_1}\circ\NC^{Depol}\nonumber\\
&=\NC_{eff}^{Depol}\circ\CC^M_{\theta_M}\circ\cdots\circ\CC^1_{\theta_1}\,,
\end{align}
where we have defined 
\begin{equation}
    \NC_{eff}^{Depol}(\rho)=(1-p)^{M+1}\rho + \left(1-(1-p)^{M+1}\right)\frac{\id}{d}\,.
\end{equation}

Using the right-hand-side of Eq.~\eqref{eq:equivalency-channels} we know that  the action of the depolarizing noise can be pulled to the end of the circuit. Then, the proof follows from Theorem~\ref{theo:noise-end}.
\end{proof}

\section{Proof of Theorem~\ref{lem:exp-supression}}
\label{ap:C}

\begin{proof}
To show that the eigenvalues are exponentially suppressed, we can recall that since the  QFIM is convex, we have
   \begin{equation}\begin{split}
       F(\widetilde{\rho}_{\thv})&=F( (1-p)^{M+1}\rho_{\thv}+(1-(1-p)^{M+1})\frac{\id}{d})\\ &\leq (1-p)^{M+1} F( \rho_{\thv})\,,\end{split}
   \end{equation}
where we have used the fact that $F(\id)=0_{M\times M}$, where $0_M$ denotes the $M\times M$ null matrix. Here we can see that as $M$ increases, $(1-p)^{M+1} F( \rho_{\thv})$ approaches a matrix whose entries are exponentially suppressed by $M$. In the limit $M\rightarrow \infty$, the matrix $(1-p)^{M+1} F( \rho_{\thv})\rightarrow 0_M$.

Rewriting the previous equation we find that $(1-p)^{M+1} F( \rho_{\thv})-F(\widetilde{\rho}_{\thv})\geq 0$, implying that $((1-p)^{M+1} F( \rho_{\thv})-F(\widetilde{\rho}_{\thv}))$ is a positive semi-definite matrix. Respectively denoting as $\{\widetilde{\lambda}^m(\thv)\}$, and $ \{\widetilde{\vec{\lambda}}^m(\thv)\}$ the sets of eigenvalues and eigenvectors of $F(\widetilde{\rho}_{\thv})$, we have
\begin{align}\label{eq:eigen-noisy}
&(\widetilde{\vec{\lambda}}^m(\thv))^T\cdot((1-p)^{M+1} F( \rho_{\thv})-F(\widetilde{\rho}_{\thv}))\cdot\widetilde{\vec{\lambda}}^m(\thv)\nonumber\\
&=(1-p)^{M+1}(\widetilde{\vec{\lambda}}^i(\thv))^T\cdot F( \rho_{\thv})\cdot\widetilde{\vec{\lambda}}^m(\thv)- \widetilde{\lambda}^m(\thv)\geq 0\,.
\end{align}
Then, we can further use the fact that  $\vec{x}^T\cdot  F(\rho_{\thv})\cdot\vec{x}\leq \lambda_{\max}(\thv) $ for any $\vec{x}$, where $\lambda_{\max}(\thv)$ is the largest eigenvalue of the QFIM for the noiseless state $F(\rho_{\thv})$. Combining this result with Eq.~\eqref{eq:eigen-noisy} leads to
\begin{equation}
    \widetilde{\lambda}^m(\thv)\leq (1-p)^{M+1} \lambda_{\max}(\thv)\,,
\end{equation}
or alternatively, $\widetilde{\lambda}^m(\thv)\in\OC((1-p)^{M+1})$. Using that $1-p\leq e^{-p}$, we have $\widetilde{\lambda}^m(\thv)\in\OC(e^{-p(M+1)})$. Hence, we find that the eigenvalues of the noisy-state QFIM become exponentially suppressed with circuit depth and the noise probability.

Next, we can also directly show that the entries of the QFIM are also exponentially suppressed by recalling from Eq.~\eqref{eq:depol-noise-state} that the state at the output of the noisy QNN is 
   $\widetilde{\rho}_{\thv}= (1-p)^{M+1}\rho_{\thv}+(1-(1-p)^{M+1})\frac{\id}{d}$.
This allows us to find that the noisy eigenvalues and eigenvectors change due to the presence of noise as
\begin{equation}
   \widetilde{r}_\mu=(1-p)^{M+1}r_\mu+\frac{(1-(1-p)^{M+1})}{d}\,,  
\end{equation}
and
\begin{equation}
\ket{\widetilde{r}_\mu}=\ket{r_\mu}\, .
\end{equation}
From the previous, we can see that the noisy state QFIM is 
\small
\begin{equation}\begin{split}
    &[F(\widetilde{\rho}_{\thv})]_{ij}\\&=\frac{\sum_{\substack{\mu,\nu\\r_\mu+ r_\nu\neq0}}2(1-p)^{2(M+1)}\Re[\bra{r_\mu}\partial_i\rho_{\thv}\ket{r_\nu}\bra{r_\nu}\partial_j\rho_{\thv}\ket{r_\mu}]}{(1-p)^{M+1}(r_\mu+r_\nu)+\frac{2(1-(1-p)^{M+1})}{d}}\,.\end{split}
\end{equation} 
\normalsize

Note that the scaling with the noise probability $p$ is $\OC\left((1-p)^{M+1}\right)$. Then, we can use the fact that $1-p\leq e^{-p}$, to find $[F(\widetilde{\rho}_{\thv})]_{ij}\in \OC( e^{-p (M+1)})$, which shows that the entries of the QFIM decrease exponentially both with circuit depth and with the noise probability.

\section{Proof of Theorem ~\ref{lem:exp-supp-pauli}}
\label{ap:D}
\begin{proof}

We start by recalling that the QFIM is obtained as a second order expansion of the Bures distance $\BC(\widetilde{\rho}_{\thv},\widetilde{\rho}_{\thv}+\delta)$ in  Eq.~\eqref{eq:bures}. That is,
\begin{equation}\label{eq:second-order}
    \BC(\widetilde{\rho}_{\thv},\widetilde{\rho}_{\thv+\vec{\delta}})=\vec{\delta}^T\cdot F(\rho_{\thv})\cdot \vec{\delta} +\OC(\norm{\vec{\delta}}^3)\,.
\end{equation}
Then, we recall that the trace distance 
\begin{equation}
    \DC(\rho,\sigma)=\frac{1}{2}\norm{(\rho,\sigma)}_1\,,
\end{equation}
is related to the Bures distance by the inequality 
\begin{equation}
    \BC(\rho,\sigma)\leq 2\DC(\rho,\sigma)\,.
\end{equation}
Using the triangle inequality of the trace distance, we have
\begin{equation}
    \BC(\rho,\sigma)\leq 2\left(\DC\left(\rho,\frac{\id}{d}\right)+\DC\left(\sigma,\frac{\id}{d}\right)\right) \,.
\end{equation}
Then, leveraging Pinsker's inequality~\cite{ohya2004quantum} 
\begin{equation}
    \DC(\rho,\sigma)\leq 2\ln(2)S(\rho|\sigma)\,,
\end{equation}
with $ S(\rho|\sigma)$ being the  relative entropy
\begin{equation}
    S(\rho|\sigma)=\Tr[\rho(\log(\rho)-\log(\sigma))]\,,
\end{equation}
leads to
\begin{equation}\label{bures-to-rel-0}
    \BC(\rho,\sigma)\leq 4\ln(2)\left(S\left(\rho\Big|\frac{\id}{d}\right)+S\left(\sigma\Big|\frac{\id}{d}\right)\right)\,.
\end{equation}
Which can be used to find
\begin{equation}\label{bures-to-rel}\small
\BC(\widetilde{\rho}_{\thv},\widetilde{\rho}_{\thv+\vec{\delta}}))\leq 4\ln(2)\left(S\left(\widetilde{\rho}_{\thv}\Big|\frac{\id}{d}\right)+S\left(\widetilde{\rho}_{\thv+\vec{\delta}}\Big|\frac{\id}{d}\right)\right)\,.
\end{equation}

First, let us note that the following lemma holds (see Appendix~\ref{ap:lemm} for a proof).
\begin{lemma}\label{lem:iterative-lemma}
    Let $\NC_{loc}^{Dep}\circ \NC^P$ be a noise channel composed of a general unital Pauli noise channel $\NC^P$ followed by a noise channel $\NC_{loc}^{Dep}$ where  local depolarizing noise channels with probability $p$ act on each qubit as in Eq.~\eqref{eq:local-depol-noise}. Then, we have that 
    \begin{align}
S\left(\NC_{loc}^{Dep}\circ \NC^P(\rho)\Big|\frac{\id}{d}\right)\leq (1-p)^2 S\left(\rho\Big|\frac{\id}{d}\right)\,.
\end{align}
\end{lemma}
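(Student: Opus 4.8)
The plan is to convert the relative-entropy statement into a contraction of the von Neumann entropy deficit and then exploit that the combined channel is diagonal in the Pauli basis with a uniform spectral gap. Writing $\pi=\id/d$, the first step is the identity $S(\rho|\pi)=\log d-S(\rho)$ with $S(\rho)=-\Tr[\rho\log\rho]$, so the claim is equivalently that the deficit $\log d-S(\cdot)$ contracts by $(1-p)^2$. Expanding $\rho=\frac{1}{d}\sum_{\vec{\alpha}\vec{\beta}}c_{\vec{\alpha}\vec{\beta}}\,X^{\vec{\alpha}}Z^{\vec{\beta}}$ with $c_{\vec{0}\vec{0}}=1$, I would use that $\NC^P$ sends $X^{\vec{\alpha}}Z^{\vec{\beta}}\mapsto c^P_{\vec{\alpha}\vec{\beta}}X^{\vec{\alpha}}Z^{\vec{\beta}}$ with $|c^P_{\vec{\alpha}\vec{\beta}}|\leq1$, while $\NC_{loc}^{Dep}$ multiplies a Pauli of weight $w$ by $(1-p)^{w}$. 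Hence $\NC:=\NC_{loc}^{Dep}\circ\NC^P$ is a unital Pauli channel whose every non-identity eigenvalue $\kappa_{\vec{\alpha}\vec{\beta}}=(1-p)^{w(\vec{\alpha},\vec{\beta})}c^P_{\vec{\alpha}\vec{\beta}}$ satisfies $|\kappa_{\vec{\alpha}\vec{\beta}}|\leq 1-p$, since a non-identity Pauli has weight $w\geq1$. This isolates the only quantity that matters, the spectral radius $\lambda:=1-p$ of $\NC$ on the traceless subspace, and reduces the lemma to a purely channel-theoretic statement.

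The statement to prove is then: if a unital Pauli channel $\NC$ fixes $\pi$ and obeys $|\kappa_{\vec{\alpha}\vec{\beta}}|\leq\lambda$ on all non-identity Paulis, then $S(\NC(\rho)|\pi)\leq\lambda^{2}S(\rho|\pi)$. The appearance of the square is already transparent at the level of the $\chi^2$-divergence $\chi^2(\rho\|\pi)=d\,\Tr[(\rho-\pi)^2]=\sum_{(\vec{\alpha},\vec{\beta})\neq\vec{0}}c_{\vec{\alpha}\vec{\beta}}^2$: the channel rescales each coefficient by $\kappa_{\vec{\alpha}\vec{\beta}}$, so $\chi^2(\NC(\rho)\|\pi)=\sum_{(\vec{\alpha},\vec{\beta})\neq\vec{0}}\kappa_{\vec{\alpha}\vec{\beta}}^2c_{\vec{\alpha}\vec{\beta}}^2\leq\lambda^2\chi^2(\rho\|\pi)$, an \emph{exact} contraction with constant $\lambda^2$. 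The remaining work is to upgrade this quadratic-divergence contraction to the \emph{same} constant for the relative entropy itself, which does not follow from $S\leq\chi^2$ alone (that inequality points the wrong way for a factorized bound).

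To perform the upgrade I would reduce to the extremal, maximally symmetric case of the global depolarizing channel $\mathcal{D}_\lambda(\cdot)=\lambda(\cdot)+(1-\lambda)\pi$, all of whose non-identity eigenvalues equal $\lambda$. Using the unitary covariance of $\mathcal{D}_\lambda$ together with convexity of $S(\cdot|\pi)$, the supremum of $S(\mathcal{D}_\lambda(\rho)|\pi)/S(\rho|\pi)$ is attained on pure states; since $\mathcal{D}_\lambda$ sends a pure state to a state with spectrum $\{\lambda+(1-\lambda)/d,(1-\lambda)/d,\dots\}$, the inequality collapses to a single scalar inequality in $\lambda$ (and $d$), provable from the power series of $x\log x$ — e.g. for $d=2$ it reads $\tfrac{1+\lambda}{2}\ln(1+\lambda)+\tfrac{1-\lambda}{2}\ln(1-\lambda)\leq\lambda^2\ln 2$, which holds because $\sum_{k\geq1}\lambda^{2k-2}/(2k(2k-1))$ is increasing and equals $\ln 2$ at $\lambda=1$. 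The strictness $c_{\vec{\alpha}\vec{\beta}}\in(-1,1)$ assumed in the paper guarantees $\lambda<1$ and makes $\pi$ the \emph{unique} fixed point, which is exactly the content certified by the companion Rényi-entropy result (Lemma~\ref{lem-renyi}) in Appendix~\ref{ap:lemm}.

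The main obstacle is the last comparison: passing from the global depolarizing channel to a \emph{general} unital Pauli channel whose non-identity spectrum merely lies in $[-\lambda,\lambda]$. A tempting factorization $\NC=\Phi\circ\mathcal{D}_\lambda$ followed by data processing fails, because the residual map $\Phi$ (with eigenvalues $\kappa_{\vec{\alpha}\vec{\beta}}/\lambda$) need not be completely positive. I therefore expect the decisive step to be an argument that the relative-entropy contraction coefficient $\eta_{\mathrm{KL}}(\NC)$ is monotone in the spectral radius — i.e. that among all unital Pauli channels with non-identity spectrum bounded by $\lambda$ the deficit is least contracted by $\mathcal{D}_\lambda$ — which I would establish via a strong-data-processing / entropy-production estimate with the common fixed point $\pi$, leaning on Lemma~\ref{lem-renyi} to control the action of $\NC^P$ and the tensor structure of $\NC_{loc}^{Dep}$. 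Once this monotonicity is in hand, iterating the resulting per-layer factor $(1-p)^2$ over the $M+1$ noise channels (the interleaved unitaries leave $S(\cdot|\pi)$ invariant) yields the $\OC(e^{-2p(M+1)})$ suppression used in Theorem~\ref{lem:exp-supp-pauli}.
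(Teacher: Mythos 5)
Your reduction discards exactly the structure that makes the lemma true, and the intermediate claim on which your whole program rests is false. You reduce the lemma to: ``if a unital Pauli channel has all non-identity Pauli eigenvalues of modulus at most $\lambda$, then $S(\NC(\rho)|\id/d)\leq \lambda^2\, S(\rho|\id/d)$,'' and you nominate the global depolarizing channel $\mathcal{D}_\lambda(\cdot)=\lambda(\cdot)+(1-\lambda)\id/d$ as the extremal case. But $\mathcal{D}_\lambda$ is itself a counterexample once $d$ is not small: on a pure state $\psi$ the output spectrum is $\{\lambda+(1-\lambda)/d\}\cup\{(1-\lambda)/d \text{ ($d-1$ times)}\}$, and the entropy-deficit ratio obeys
\begin{equation*}
\frac{\ln d-S(\mathcal{D}_\lambda(\psi))}{\ln d}\;=\;\lambda+\frac{\lambda\ln\lambda+(1-\lambda)\ln(1-\lambda)}{\ln d}+o(1)\;\xrightarrow[d\to\infty]{}\;\lambda\;>\;\lambda^2\,.
\end{equation*}
Concretely, for $\lambda=1/2$ and $d=16$ (four qubits) the ratio is already $\approx 0.293>0.25=\lambda^2$. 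Your scalar verification covers only $d=2$, and the ``supremum attained on pure states via convexity'' step is also unjustified (a ratio of two convex functions need not be maximized at extreme points), but the deeper point is that the monotonicity you hope to prove in the last paragraph cannot exist: no bound on the traceless-sector spectral radius alone yields the squared rate. The $(1-p)^2$ is a special property of the \emph{tensor product} of local depolarizing channels, where a weight-$w$ Pauli is damped by $(1-p)^w$, and it is precisely this tensor structure that Ref.~\cite{muller2016relative} exploits (via log-Sobolev/hypercontractivity tensorization) to prove Theorem 6.1. This is also why the paper obtains only $\OC(e^{-p(M+1)})$ for \emph{global} depolarizing noise (Theorem~\ref{lem:exp-supression}) but $\OC(e^{-2p(M+1)})$ for \emph{local} depolarizing noise (Theorem~\ref{lem:exp-supp-pauli}); your claimed reduction would erase that distinction.

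You also misassign the role of Lemma~\ref{lem-renyi}: it is not a fixed-point-uniqueness certificate but \emph{is} the needed $(1-p)^2$ contraction for $\NC_{loc}^{Dep}$, taken as a black box from~\cite{muller2016relative}. Given that, the paper's proof is two lines and needs no spectral analysis of $\NC^P$ at all: since $\NC^P$ is unital, $\NC^P(\id)=\id$, so monotonicity of relative entropy under the channel $\NC^P$ gives
\begin{equation*}
S\left(\NC^P(\rho)\Big|\frac{\id}{d}\right)=S\left(\NC^P(\rho)\Big|\frac{\NC^P(\id)}{d}\right)\leq S\left(\rho\Big|\frac{\id}{d}\right)\,,
\end{equation*}
and applying Lemma~\ref{lem-renyi} to the depolarizing layer then yields the stated bound. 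Your correct observations (the combined channel has traceless-sector spectral radius $1-p$; the factorization $\NC=\Phi\circ\mathcal{D}_\lambda$ fails because $\Phi$ need not be completely positive; unitaries leave $S(\cdot|\id/d)$ invariant, enabling the iteration over $M+1$ layers) can all be kept, but the entire spectral-radius program must be replaced by this data-processing step plus the cited contraction theorem.
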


Using  Lemma~\ref{lem:iterative-lemma} iteratively $(M+1)$ times we can find 
\begin{equation}
\begin{split}\label{eq:bounds-rel-ent}
    S\left(\widetilde{\rho}_{\thv}\Big|\frac{\id}{d}\right)&\leq (1-p)^{2(M+1)} S\left(\rho\Big|\frac{\id}{d}\right)\\
S\left(\widetilde{\rho}_{\thv+\vec{\delta}}\Big|\frac{\id}{d}\right)&\leq (1-p)^{2(M+1)} S\left(\rho\Big|\frac{\id}{d}\right)\,.
\end{split}
\end{equation}

Combining the result in  Eq.~\eqref{bures-to-rel} with those in Eq.~\eqref{eq:bounds-rel-ent} leads to
\begin{equation}
    \BC(\rho_{\thv},\sigma)\leq 8\ln(2)(1-p)^{2(M+1)}S\left(\rho\Big|    \frac{\id}{d}\right)\,.
\end{equation}
Then, from Eq.~\eqref{eq:second-order} we find that to second order
\begin{equation}
\vec{\delta}^T\cdot F(\widetilde{\rho}_{\thv})\cdot \vec{\delta} \leq 8\ln(2)(1-p)^{2(M+1)}S\left(\rho\Big|    \frac{\id}{d}\right)\,,
\end{equation}
which shows that $\vec{\delta}^T\cdot F(\rho_{\thv})\cdot \vec{\delta}$ is exponentially suppressed as $\OC(p^{M+1})$, because $S\left(\rho\Big|    \frac{\id}{d}\right)\leq \log d$. Since the previous holds for every possible value of $\vec{\delta}$, it necessarily follows that the QFIM matrix is suppressed as $\OC((1-p)^{2(M+1)})$, or concomitantly, as $\OC(e^{-2p(M+1)})$. 

\end{proof}

\section{Proof of Lemma~\ref{lem:iterative-lemma}}\label{ap:lemm}

\begin{proof}
    First, we will leverage the following lemma
    \begin{lemma}[Müller-Hermes/França/Wolf~\cite{muller2016relative}, Theorem 6.1, Rephrased]\label{lem-renyi}
Let $\NC$ be a noisy channel as in Eq.~\eqref{eq:local-depol-noise} where local depolarizing noise channels with probability $p$ acts on each qubit. Then
\begin{equation}
S\left(\NC\left(\rho\right)\Big|\frac{\id}{d}\right)\leq (1-p)^2 S\left(\rho\Big|    \frac{\id}{d}\right)\,.
\end{equation}
\end{lemma}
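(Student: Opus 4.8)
The statement is the tensorized relative-entropy contraction established in Ref.~\cite{muller2016relative}, and the plan is to reconstruct it in three steps that isolate an elementary single-qubit estimate from the genuinely hard claim of $n$-independence. Throughout I write $\lambda=1-p$, so that the channel $\NC_{loc}^{Dep}$ of Eq.~\eqref{eq:local-depol-noise} is the $n$-fold tensor power of the single-qubit depolarizing map $D_\lambda(\sigma)=\lambda\sigma+(1-\lambda)\id/2$ of Eq.~\eqref{eq:local_depolarizing}. I would also rewrite $S(\rho|\id/d)=\log d-S(\rho)$, the \emph{entropy deficit}, so that the claim $S(\NC(\rho)|\id/d)\leq\lambda^2 S(\rho|\id/d)$ becomes a purely entropic statement about how much von Neumann entropy local depolarizing noise injects.

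First I would settle the single-qubit case $n=1$. By unitary covariance of $D_\lambda$ one may diagonalize $\rho$ and parametrize it by its Bloch radius $r\in[0,1]$, so that $D_\lambda$ acts as $r\mapsto\lambda r$ and $S(\rho|\id/2)=g(r)$ with $g(r)=\tfrac{1+r}{2}\log(1+r)+\tfrac{1-r}{2}\log(1-r)$. The desired bound $g(\lambda r)\leq\lambda^2 g(r)$ is then equivalent to the monotonicity of $r\mapsto g(r)/r^2$ on $(0,1]$, which I would verify by differentiation. Intuitively the extremal ratio is attained as $r\to0$, where $g(r)\to\tfrac12 r^2$ and relative entropy degenerates into one half of the $\chi^2$-divergence, whose contraction factor is exactly $\lambda^2$.

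Second I would record the matching $\chi^2$ estimate for all $n$, since it already fixes the correct factor and, crucially, tensorizes for free. Expanding $\rho=d^{-1}\sum_P c_P\,P$ in the Pauli-string basis with $c_{\id}=1$, one has $\chi^2(\rho|\id/d)=d\,\Tr[\rho^2]-1=\sum_{P\neq\id}c_P^2$, and the channel acts diagonally by multiplying each coefficient by $\lambda^{w(P)}$, where $w(P)$ is the number of qubits on which the string $P$ is nontrivial. Hence
\[
\chi^2(\NC(\rho)|\id/d)=\sum_{P\neq\id}\lambda^{2w(P)}c_P^2\leq\lambda^2\sum_{P\neq\id}c_P^2=\lambda^2\,\chi^2(\rho|\id/d),
\]
because $w(P)\geq1$ and $\lambda\leq1$; the bound is completely insensitive to $n$.

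The hard step --- and the actual content of Ref.~\cite{muller2016relative} --- is to upgrade this order-two ($\chi^2$) contraction to the order-one (relative-entropy) contraction while keeping the factor $\lambda^2$ independent of $n$. This cannot be obtained from a naive chain rule: decomposing $S(\rho|\id/d)$ qubit by qubit leaves a conditional relative-entropy remainder on which the single-site bound does not visibly act by $\lambda^2$, and relative-entropy contraction coefficients are known not to tensorize for general channels. The route I would follow is the semigroup/interpolation one: embed $\NC=e^{t\mathcal{L}}$ with $\lambda=e^{-t}$ and $\mathcal{L}=\sum_j\mathcal{L}_j$ a sum of commuting single-site depolarizing Lindbladians, then reduce the claim to a modified logarithmic Sobolev inequality for $\mathcal{L}$ whose constant is inherited from a single site, so that the decay rate matched to $\lambda=e^{-t}$ reproduces $\lambda^2$; equivalently, interpolate the sandwiched R\'enyi divergences between orders $2$ and $1$ to transport the order-two factor of the previous step down to order one. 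I expect this tensorization of the underlying functional inequality to be the main obstacle, and I would import it from Ref.~\cite{muller2016relative}, after only checking that their depolarizing parameter matches $\lambda=1-p$ and that their fixed-point relative entropy coincides with $S(\,\cdot\,|\id/d)$.
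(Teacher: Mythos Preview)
The paper does not actually prove this lemma: it is stated as a quotation of Theorem~6.1 in Ref.~\cite{muller2016relative} and then immediately invoked inside the proof of Lemma~\ref{lem:iterative-lemma}, with no argument supplied. So the ``paper's proof'' is simply a citation.

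Your proposal goes further than the paper by sketching how the result might be reconstructed. Your single-qubit calculation and the $\chi^2$-tensorization step are correct and give good intuition for why the factor is $(1-p)^2$ rather than $(1-p)$. You are also honest that the genuinely hard part---passing from the order-two (Hilbert--Schmidt/$\chi^2$) contraction to the order-one (relative-entropy) contraction with an $n$-independent constant---is precisely the content of Ref.~\cite{muller2016relative}, and you propose to import it. In that sense your proposal and the paper's treatment ultimately coincide: both defer the tensorization of the modified log-Sobolev constant to the cited reference. What your write-up buys is pedagogical clarity about \emph{where} the difficulty lies and why naive approaches (chain rule, direct tensorization of contraction coefficients) fail; what the paper's treatment buys is brevity, since the lemma is used only as a black box.

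If you want to go beyond the paper, the concrete obstacle to close is exactly the one you named: showing that the single-site modified log-Sobolev constant for the depolarizing semigroup tensorizes (or, equivalently, carrying the R\'enyi-$2$ contraction down to R\'enyi-$1$ uniformly in $n$). Absent that, your proposal is a correct outline but not a self-contained proof---which is also all the paper claims.
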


From Lemma~\ref{lem-renyi} we then have 
\begin{align}
S\left(\NC_{loc}^{Dep}\circ \NC^P(\rho)\Big|\frac{\id}{d}\right)\leq (1-p)^2 S\left(\NC^P(\rho)\Big|\frac{\id}{d}\right)\,.\label{eq:proof-prop1}
\end{align}
Then, recalling that $\NC^P$ is a unital noise channel that maps the identity to the identity, we will have $\NC^P(\id)=\id$, and hence 
\begin{equation}
    S\left(\NC^P(\rho)\Big|\frac{\id}{d}\right)=S\left(\NC^P(\rho)\Big|\frac{\NC^P(\id)}{d}\right)\,.
\end{equation}
Finally, we can use the monotonicity of the relative entropy to find 
\begin{equation}
    S\left(\NC^P(\rho)\Big|\frac{\NC^P(\id)}{d}\right)\leq S\left(\rho\Big|\frac{\id}{d}\right)\,.\label{eq:proof-prop2}
\end{equation}
Combining  Eqs.~\eqref{eq:proof-prop1} and~\eqref{eq:proof-prop1} leads to 
\begin{equation}
S\left(\NC\left(\rho\right)\Big|\frac{\id}{d}\right)\leq (1-p)^2 S\left(\rho\Big|    \frac{\id}{d}\right)\,.
\end{equation}
\end{proof}

\section{Extension of our results for general noise models.}\label{ap:extension}

As mentioned in the main text, we can extend the results in Theorem~\ref{lem:exp-supp-pauli} to the case when the noise model in Eq.~\eqref{eq:noisy-channel-1} is composed of a general unital Pauli channel followed by local depolarizing channels where the depolarization probability is \textit{qubit dependent}. That is, when  
\begin{equation} \label{eq:noise_channel-2-ap}
    \NC_m = \widehat{\NC}_{loc}^{Dep}(\rho)\circ \NC_m^P(\rho)\,,
\end{equation}
and where 
\begin{equation}\label{eq:local-depol-noise-ap}
\widehat{\NC}^{Depol}_{loc}(\rho)=\bigotimes_{j=1}^n \widehat{\NC}^{Depol}_j(\rho)\,,
\end{equation}
with 
\begin{align}\label{eq:pauli_channel_prob-ap}
    \widehat{\NC}^{Depol}_j(\rho)
    &=(1-p_j)\rho + p_j\frac{\id_j\otimes\Tr_{j}[\rho]}{2}\,.
\end{align}
Here, $ 0< p_j < 1$ denotes the qubit-dependent probability of depolarization. As such, the case presented in the main text corresponds to $p_j=p$ for all $j$. As we will se below, the results in Theorem~\ref{lem:exp-supp-pauli} can be generalized as follows
\begin{theorem} \label{lem:exp-supp-pauli-extension}
Let $\widetilde{\CC}_{\thv}$ be a noisy channel as in Eqs.~\eqref{eq:noisy-channel-1} and~\eqref{eq:noise_channel-2-ap}, where a Pauli noise channel (composed of qubit-dependent local depolarizing noise acting on each qubit plus a general unital Pauli channel)  acts before and after each gate of the QNN. Furthermore, let $q=\min_j \{p_j\}$ be the minimum (non-zero) probability of the local depolarizing channels as in Eq.~\eqref{eq:pauli_channel_prob-ap}. The entries of the QFIM, and thus its eigenvalues, are exponentially suppressed with the product of $M$ and $q$  as $\OC(q^{2(M+1)})$.
\end{theorem}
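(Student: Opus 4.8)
The plan is to mirror the proof of Theorem~\ref{lem:exp-supp-pauli} almost verbatim, the only modification being that the uniform contraction factor $(1-p)^2$ appearing in Lemma~\ref{lem:iterative-lemma} must be replaced by a contraction factor governed by the smallest depolarizing probability $q=\min_j\{p_j\}$. Concretely, I would first establish the qubit-dependent analogue of Lemma~\ref{lem:iterative-lemma}, namely that for the noise channel $\NC_m=\widehat{\NC}_{loc}^{Dep}\circ\NC_m^P$ of Eq.~\eqref{eq:noise_channel-2-ap} one has
\begin{equation}
S\left(\widehat{\NC}_{loc}^{Dep}\circ \NC^P(\rho)\Big|\frac{\id}{d}\right)\leq (1-q)^2 S\left(\rho\Big|\frac{\id}{d}\right)\,.
\end{equation}
Everything downstream of this inequality---the Bures/trace-distance/Pinsker chain culminating in Eq.~\eqref{bures-to-rel}, the iteration over the $M+1$ noise layers, and the identification of the QFIM with the second-order term of the Bures distance---is left unchanged.

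The key new ingredient is to reduce the qubit-dependent contraction to the uniform one of Lemma~\ref{lem-renyi}. I would exploit the composition law of single-qubit depolarizing channels: two depolarizing channels of probabilities $a$ and $b$ compose into a single depolarizing channel of probability $c$ with $(1-c)=(1-a)(1-b)$. Since $p_j\geq q$, each qubit-dependent factor admits the factorization $\widehat{\NC}_j^{Dep}=\widehat{\NC}_j^{res}\circ\NC_{q}^{Dep}$, where $\NC_{q}^{Dep}$ is depolarizing with the \emph{uniform} probability $q$ and $\widehat{\NC}_j^{res}$ is a residual single-qubit depolarizing channel with probability $p_j'$ defined by $(1-p_j')=(1-p_j)/(1-q)\in(0,1]$. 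Taking the tensor product over qubits yields $\widehat{\NC}_{loc}^{Dep}=\widehat{\NC}_{loc}^{res}\circ\NC_{loc,q}^{Dep}$, where $\NC_{loc,q}^{Dep}$ is the uniform local depolarizing channel of probability $q$.

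With this factorization in hand, the inequality follows from three standard steps. First, since $\widehat{\NC}_{loc}^{res}$ is a unital CPTP map that fixes $\id/d$, the monotonicity (data-processing) of the relative entropy strips it off. Second, I apply Lemma~\ref{lem-renyi} to the uniform layer $\NC_{loc,q}^{Dep}$ to extract the factor $(1-q)^2$. Third, exactly as in the proof of Lemma~\ref{lem:iterative-lemma}, I use the unitality of $\NC^P$ together with monotonicity to remove the Pauli channel. Iterating the resulting bound $(M+1)$ times gives $S(\widetilde{\rho}_{\thv}|\id/d)$ and $S(\widetilde{\rho}_{\thv+\vec{\delta}}|\id/d)$ both in $\OC((1-q)^{2(M+1)})$, and substituting into the Bures bound shows $\vec{\delta}^T\cdot F(\widetilde{\rho}_{\thv})\cdot\vec{\delta}\in\OC((1-q)^{2(M+1)})$ for every $\vec{\delta}$, hence the claimed $\OC(e^{-2q(M+1)})$ suppression of the QFIM entries and eigenvalues.

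I expect the main obstacle to be bookkeeping rather than conceptual: verifying that the residual probabilities $p_j'$ are legitimate (so that each $\widehat{\NC}_j^{res}$ is a genuine CPTP unital channel and monotonicity is applicable), which is precisely where the choice $q=\min_j\{p_j\}$---as opposed to any other $p_j$---is forced, since only the minimal probability guarantees $(1-p_j')\leq 1$ simultaneously for every qubit. Beyond that, one must confirm that the per-qubit factorizations assemble into a single uniform layer of probability $q$ pulled out of $\widehat{\NC}_{loc}^{Dep}$; this is immediate because the decomposition is carried out independently on each tensor factor.
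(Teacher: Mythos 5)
Your proposal is correct and follows essentially the same route as the paper's proof: both exploit the composition law for depolarizing channels to factor each $\widehat{\NC}^{Depol}_j$ into a uniform probability-$q$ layer and a residual depolarizing channel with probability $\tau_j=(p_j-q)/(1-q)$, then reduce to the uniform case. The only cosmetic difference is that the paper absorbs the residual channels together with $\NC_m^P$ into a single effective unital Pauli channel $\widetilde{\NC}_m^{P_{eff}}$ and invokes Lemma~\ref{lem:iterative-lemma} as a black box, whereas you strip the residual off separately via data processing before applying Lemma~\ref{lem-renyi}---the two bookkeeping choices are equivalent (and your derived bound $\OC\bigl((1-q)^{2(M+1)}\bigr)=\OC(e^{-2q(M+1)})$ is the intended reading of the theorem's $\OC(q^{2(M+1)})$, which is a typo in the statement).
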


\begin{proof}
    First, let us show that for every noise channel one has
    \begin{equation}
        \widehat{\NC}_{loc}^{Dep}(\rho)\circ \NC_m^P(\rho)=\NC_{loc}^{Dep}(\rho)\circ \widetilde{\NC}_m^{P_eff}(\rho)\,,
    \end{equation}
    where $\NC_{loc}^{Dep}(\rho)$ is a channel composed of local depolarizing channels on each qubit with probability $q=\min_j\{p_j\}$ and where $\NC_m^{P_eff}(\rho)$ is a unital Pauli channel. We start by noting that given a local depolarizing channel with probability $p_j>q$,  one has 
    \begin{equation}
\widehat{\NC}^{Depol}_j(\rho)=\NC^{Depol}_j(\rho)\circ \widetilde{\NC}^{Depol}_j(\rho)\,,
    \end{equation}
where
\begin{equation}
\begin{split}
        \NC^{Depol}_j(\rho)=(1-q)\rho + q\frac{\id_j\otimes\Tr_{j}[\rho]}{2}\,,\\
    \widetilde{\NC}^{Depol}_j(\rho)=(1-\tau_j)\rho + \tau_j\frac{\id_j\otimes\Tr_{j}[\rho]}{2}\,,
\end{split}
\end{equation}
and
\begin{equation}
    \tau_j=\frac{p_j-q}{1-q}\,.
\end{equation}
That is, the local depolarizing channel acting on the $j$-th qubit with probability $p_j>q$ can always be expressed as a concatenation of two depolarizing channels, one with probability $\tau_j$ and another with probability $q$. 

From the previous, we can write 
\begin{align}
\widehat{\NC}^{Depol}_{loc}(\rho)=\bigotimes_{j=1}^n \NC^{Depol}_j(\rho)\circ \bigotimes_{j=1}^n\widetilde{\NC}^{Depol}_j(\rho)\nonumber\\
=\NC_{loc}^{Dep}(\rho)\circ \widetilde{\NC}_{loc}^{Dep}(\rho) \,,
\end{align}
where $\NC_{loc}^{Dep}(\rho)=\bigotimes_{j=1}^n \NC^{Depol}_j(\rho)$ is a noise channel where local depolarizing noise acts on each qubit with probability $q$, and where  $\widetilde{\NC}_{loc}^{Dep}(\rho)=\bigotimes_{j=1}^n\widetilde{\NC}^{Depol}_j(\rho)$ is a  channel where a local depolarizing channel  with probability $\tau_j$ acts on the $j$-th qubit. 
Replacing the previous result in Eq.~\eqref{eq:noise_channel-2-ap} leads to 
\begin{align} 
    \NC_m &= \widehat{\NC}_{loc}^{Dep}(\rho)\circ \NC_m^P(\rho)\nonumber\\
    &=\NC_{loc}^{Dep}(\rho)\circ \widetilde{\NC}_{loc}^{Dep}(\rho)\circ \NC_m^P(\rho)\nonumber\\
    &=\NC_{loc}^{Dep}(\rho)\circ \widetilde{\NC}_m^{P_eff}(\rho)\,,
\end{align}
where we have defined the unital Pauli channel  $\widetilde{\NC}_m^{P_eff}(\rho)=\widetilde{\NC}_{loc}^{Dep}(\rho)\circ \NC_m^P(\rho)$.

From here one can follow the proof of Theorem~\ref{lem:exp-supp-pauli}\,.
\end{proof}

Finally, we note that, as discussed in the main text, Theorem~\ref{lem:exp-supp-pauli} was derived for the case when
\begin{equation} 
    \NC_m = \NC_{loc}^{Dep}(\rho)\circ \NC_m^P(\rho)\,.
\end{equation}
However, the theorem also holds if
\begin{equation} 
    \NC_m = \NC_m^P(\rho)\circ \NC_{loc}^{Dep}(\rho)\,.
\end{equation}
The previous can be seen by noting that the following lemma holds.
\begin{lemma}\label{lem:iterative-lemma-2}
    Let $ \NC^P\circ \NC_{loc}^{Dep}$ be a noise channel composed of a noise channel $\NC_{loc}^{Dep}$ where  local depolarizing noise channels with probability $p$ act on each qubit followed by a  general unital Pauli noise channel $\NC^P$. Then, we have that 
    \begin{align}
S\left(\NC_{loc}^{Dep}\circ \NC^P(\rho)\Big|\frac{\id}{d}\right)\leq (1-p)^2 S\left(\rho\Big|\frac{\id}{d}\right)\,.
\end{align}
\end{lemma}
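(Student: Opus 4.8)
The plan is to mirror the proof of Lemma~\ref{lem:iterative-lemma}, but to reverse the order in which the two contraction steps are applied, since now the unital Pauli channel $\NC^P$ acts \emph{after} the local depolarizing layer $\NC_{loc}^{Dep}$ rather than before it. The two ingredients remain exactly those used before: (i) the data-processing (monotonicity) inequality for the quantum relative entropy under an arbitrary CPTP map, combined with the unitality of $\NC^P$, and (ii) the strong contraction of the local depolarizing layer toward the maximally mixed state supplied by Lemma~\ref{lem-renyi}. Throughout I read the composition with $\NC^P$ acting last, so the target quantity is $S(\NC^P\circ\NC_{loc}^{Dep}(\rho)|\id/d)$.

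First I would use that $\NC^P$ is unital, so $\NC^P(\id)=\id$ and hence $\frac{\id}{d}=\NC^P\!\left(\frac{\id}{d}\right)$. Writing $\sigma=\NC_{loc}^{Dep}(\rho)$, this recasts the target quantity as
\begin{equation}
S\left(\NC^P\circ\NC_{loc}^{Dep}(\rho)\Big|\frac{\id}{d}\right)=S\left(\NC^P(\sigma)\Big|\NC^P\left(\frac{\id}{d}\right)\right)\,.
\end{equation}
Next I would invoke the monotonicity of the relative entropy under the channel $\NC^P$, namely $S(\NC^P(\sigma)|\NC^P(\tau))\leq S(\sigma|\tau)$, to strip off the outer Pauli layer and obtain
\begin{equation}
S\left(\NC^P\circ\NC_{loc}^{Dep}(\rho)\Big|\frac{\id}{d}\right)\leq S\left(\NC_{loc}^{Dep}(\rho)\Big|\frac{\id}{d}\right)\,.
\end{equation}
Finally I would apply Lemma~\ref{lem-renyi} directly to the now-exposed local depolarizing layer, which yields $S\left(\NC_{loc}^{Dep}(\rho)\Big|\frac{\id}{d}\right)\leq(1-p)^2\,S\left(\rho\Big|\frac{\id}{d}\right)$. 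Chaining the three displays gives the claimed bound.

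I do not expect a genuine obstacle here, as the argument is essentially a reordering of the steps in Lemma~\ref{lem:iterative-lemma}. The only point requiring care is the sequencing: because Lemma~\ref{lem-renyi} is stated for the depolarizing channel acting outermost, one cannot apply it first when $\NC^P$ sits on the outside; instead one must peel off $\NC^P$ via monotonicity and only afterwards contract the depolarizing layer. The essential structural fact making this work is the unitality of $\NC^P$, which guarantees that the reference state $\id/d$ is itself a fixed point of $\NC^P$ and therefore survives the data-processing step unchanged.
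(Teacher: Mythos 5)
Your proof is correct and takes essentially the same route as the paper, which itself only remarks that the proof of this lemma ``follows that of Lemma~\ref{lem:iterative-lemma}'': the same two ingredients (unitality of $\NC^P$ so that $\id/d$ is a fixed point, data-processing for the relative entropy, and the contraction of Lemma~\ref{lem-renyi}) applied in the reversed order, exactly as you spell out. You also correctly read the intended composition $\NC^P\circ\NC_{loc}^{Dep}$, matching the lemma's verbal description, even though the displayed inequality in the statement repeats the ordering of Lemma~\ref{lem:iterative-lemma}.
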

Where the proof of Lemma~\ref{lem:iterative-lemma-2} follows that of Lemma~\ref{lem:iterative-lemma}. Hence, we can use Lemma~\ref{lem:iterative-lemma-2} iteratively $M+1$ times and recover the same result as that in Theorem~\ref{lem:exp-supp-pauli}.

\end{proof}
\end{document}